\declaretheorem[style=definition]{example}
\numberwithin{equation}{section}
\DeclareMathOperator*{\supp}{supp}
\newtheorem{lem}{Lemma}
\newtheorem{thm}{Theorem}
\newtheorem*{thm*}{Theorem}
\newtheorem{cor}{Corollary}
\newtheorem{defn}{Definition}
\begin{document}
\renewcommand{\thefootnote}{\fnsymbol{footnote}}
\renewcommand\thmcontinues[1]{Continued}
\title{Intermediated Implementation}
\author{Anqi Li\footnote{Department of Economics, Washington University in St. Louis, anqili@wustl.edu. }
\and Yiqing Xing\footnote{Carey Business School, Johns Hopkins University,  xingyq@jhu.edu. We thank the associate editor, three anonymous referees, Chen Cheng, Matt Jackson, Xing Li, George Mailath, Mike Peters, Andy Postlewaite, Florian Scheuer, Ilya Segal, Chris Shannon, Anthony Lee Zhang, as well as the seminar audience at UC Berkeley, U of Michigan, UPenn, Stanford, ASSA, ES Summer Meetings and SED for helpful comments and suggestions. All errors are our own.}}
\date{\emph{Forthcoming in European Economic Review}}
\maketitle

\begin{abstract}
\noindent We examine problems of ``intermediated implementation,'' in which a single principal can only regulate limited aspects of the consumption bundles traded between intermediaries and agents with hidden characteristics. An example is sales, in which retailers offer  menus of consumption bundles to customers with hidden tastes, whereas a manufacturer with a potentially different goal from retailers' is limited to regulating sold consumption goods but not retail prices by legal barriers. We study how the principal can implement through intermediaries any social choice rule that is incentive compatible and individually rational for agents. We demonstrate the effectiveness of per-unit fee schedules and distribution regulations, which hinges on whether intermediaries have private or interdependent values. We give further applications to healthcare regulation and income redistribution.

\bigskip

\bigskip

\noindent Keywords: implementation; vertical structure; adverse selection; monopolistic screening.\\

\noindent JEL codes: D40, D82, H21, I13, L10. 

\bigskip 

\bigskip

\bigskip

\bigskip

\bigskip

\bigskip

\bigskip

\bigskip

\end{abstract} 
\renewcommand{\thefootnote}{\arabic{footnote}}
\pagebreak

\section{Introduction}\label{sec_intro}
We examine problems of ``intermediated implementation,'' which involve a principal, one or multiple intermediaries and a continuum of agents with hidden characteristics. Intermediaries offer menus of multifaceted consumption bundles to agents, whereas the principal with a potentially different goal from intermediaries' is limited to regulating sub-aspects of the sold consumption bundles by practical barriers. 

Problems that fit the above description abound. An example is car sales, in which dealers offer car-price bundles to customers with hidden tastes. A manufacturer, say, Mercedes Benz, has a potentially different goal from maximizing dealer profit (due to, e.g., additional concerns such as brand loyalty or the wedge introduced by warranties) but has only partial control over the sold car-price bundles. On the one hand, the manufacturer controls the car supply and actively tracks sold cars through their unique vehicle identification numbers (VIN). On the other hand, she cannot regulate prices explicitly or implicitly, as resale price maintenance is ruled illegal by the Sherman Act in the U.S. and the EU competition law.

In this paper, we take the barriers faced by the principal as given and examine how she can implement through intermediaries \emph{any} target social choice rule that is incentive compatible and individually rational for agents. We focus on quasi-linear environments in which intermediaries compete \`{a} la Bertrand and relax these assumptions in later sections. Our solutions exploit the essential properties of incentive compatible social choice rules rather than their exact functional forms. 

We analyze a stylized game in which competitive intermediaries offer menus of consumption bundles to agents, taking the principal's partial regulation of sold consumption goods as given. We say that a sub-game perfect equilibrium achieves \emph{intermediated implementation} if agents consume target consumption bundles and intermediaries break even on the equilibrium path. We mainly concern what policies achieve intermediated implementation in \emph{every} sub-game perfect equilibrium, and under what conditions. 

We consider two kinds of policies: \emph{per-unit fee schedules} and \emph{distribution regulations}. They constitute the weakest and strongest consumption regulations and are commonly used in practice. Their effectiveness hinges on whether intermediaries have \emph{interdependent} or \emph{private values} or, more specifically, whether their payoffs depend directly on agent hidden characteristics. In the private-value case, intermediated implementation can be achieved through a per-unit fee schedule that charges the target profit from selling every consumption good. In the interdependent-value case, per-unit fee schedules cannot generally be used to achieve intermediated implementation, whereas regulating the distribution of sold consumption goods can under weak regularity conditions. 

To illustrate, suppose, in the example of car sales, 20 percent of the customers have high tastes for cars and 80 percent have low tastes for cars. Dealers have interdependent values if sales profit depends directly on customer's taste, and they have private values otherwise. The manufacturer produces high- and low-quality cars and adopt two kinds of policies in practice: (1) \emph{invoice-price schedules} that charge possibly different per-unit fees for different car models, and (2) \emph{quota schemes} mandating that dealers sell a certain variety of cars. Through tracking the unique VIN number associated with each car and monitoring dealer inventories, the manufacturer can threaten to terminate business unless sales records are deemed satisfactory. A recent quote from a salesman at a Benz dealership in the Bay Area goes: ``We still have 2018 CLA-class left and they must go, otherwise the mixture doesn't look right and Benz won't deliver us new cars.''

In the private-value case, charging intermediaries the target profit from selling  each car achieves intermediated implementation. The reason is that under this per-unit fee schedule, a car makes profit if and only if it is sold above its target price. Meanwhile, incentive compatibility implies that each customer prefers his target car-price bundle to that of other customers, let alone bundles that are profitable to dealers. Thus the Bertrand competition between dealers drives profit to zero and sustains the target car-price rule in an equilibrium. Indeed, all equilibria are payoff equivalent for customers, and they all achieve intermediated implementation if no customer is indifferent under the target car-price rule. 

In the interdependent-value case, the adverse selection between customers and dealers could render invoice-price schedules ineffective. As will be shown later, undercutting the price of low-quality cars attracts both types of customers and makes a profit if high-taste customers are more profitable to serve (e.g., they drive less aggressively and incur lower repair costs under warranty) and have a binding incentive compatibility constraint under the target car-price rule. Consider instead a quota scheme mandating that among the cars sold by each dealer, 20 percent must be of high quality and 80 percent be of low quality. In order to meet this requirement, the only way that dealers can deviate from selling target-quality cars to customers is to sell high-quality cars to low-taste customers and low-quality cars to high-taste customers. However, since this permuted consumption rule is decreasing in  customer's taste, it cannot be part of any incentive compatible allocation and hence the result of dealer's profitable deviation. Thus in equilibrium, dealers offer target-quality cars to customers,  and competition drives prices to target levels. 

The above findings generalize to environments in which agents have quasi-linear utilities and multi-dimensional hidden characteristics. Based on Rochet (1987)'s characterization of incentive compatibility by cyclic monotonicity (CMON), we show that distribution regulation achieves intermediated implementation in every sub-game perfect equilibrium if and only if permuting consumption goods among agents yields a distinct total utility of consumption---a condition termed (\ref{eqn_du}). In single-dimensional environments, (\ref{eqn_du}) is satisfied if agent utilities satisfy the single-crossing property. In multi-dimensional environments, the profiles of agent utilities ruled out by (\ref{eqn_du}) are negligible compared to those that can be attained by incentive compatible allocations. 
  
We give two more applications of our results. We first examine how, under market-based healthcare systems, the government can achieve the target insurance provision through partial regulations of sold insurance policies. This application features interdependent values, as the profit from insurance sales depends directly on the patient's risk type. We find that when patients have CARA utilities, per-unit coverage-plan subsidies cannot generally achieve the government's target, whereas regulating the variety of sold coverage plans can. This finding justifies recent rulings by the Affordable Care Act, which mandate that all participating companies in the health insurance exchange sell a variety of coverage plans and penalize excessive sales of low-coverage plans (\cite{folger}; \cite{cox}).

We next examine income redistribution among workers with hidden abilities, recognizing in that in reality, it is often firms rather than the government who specify and monitor worker performance based on nonverifiable or even subjective information (\cite{levin}). We distinguish between whether firms can contract on effective labor outputs or labor hours, which correspond to the private-value case and interdependent-value case, respectively. We show that the Mirrleesian income tax schedule achieves the target redistribution in the first case, but must be used together with distribution regulations such as disability quota in the second case. 

\subsection{Related Literature}\label{sec_literature}
\paragraph{Adverse selection}
 Since \cite{rothschildstiglitz}, economists have long recognized that interdependent values, or adverse selection, may cause the nonexistence or inefficiency of competitive equilibria. Among the various attempts to tackle this challenge, \cite{prescotttownsend}, \cite{gale}, \cite{dubey}, \cite{bisingottardi} and \cite{azevedo} examine Walrasian equilibria in which players take the price and trader composition of any potential contract as given;  \cite{miyazaki}, \cite{wilson} and \cite{riley} develop solution concepts that allow players to anticipate the reactions of other players to their contractual offers; last but not least, \cite{inderst} and \cite{search} demonstrate the usefulness of capacity constraints for restoring equilibrium existence.\footnote{\cite{attar} use latent contracts to deter cream-skimming deviations when competition is nonexclusive. Here competition is exclusive.}

The current work differs from the above ones in four aspects. First, we implement any IC-IR social choice rule, which differs generally from the equilibrium outcomes of laissez-faire economies. Second, we study a different game in which intermediaries can specify all facets of the consumption bundle and face no limit on the menu size on the one hand but are subject to the principal's partial regulation on the other hand. Third, we adopt a standard solution concept.\footnote{\cite{hellwig}, \cite{scheuer} and the references therein provide game-theoretic foundations for the solution concepts developed by \cite{miyazaki}, \cite{wilson} and \cite{riley} through introducing additional mechanisms (e.g., withdraw, renegotiation) into the original game of \cite{rothschildstiglitz}---a trick we do not employ here.} Fourth, we give necessary and sufficient conditions for achieving intermediated implementation in every equilibrium based on model primitives rather than obtaining equilibrium uniqueness through refinements. The latter, if performed, could only relax our conditions. 

\paragraph{Vertical coordination} We make two additions to the literature on vertical coordination. First, we demonstrate the implementability of any IC-IR social choice rule through intermediaries.
Second, we investigate a new kind of vertical externality stemming from the adverse selection between agents and intermediaries. Adverse selection is present in important vertical structures such as retail sales, healthcare regulation and income redistribution.  The resulting vertical externality is shown to be solvable by distribution regulations, e.g., full-line forcing that mandates the sale of all consumption goods. \cite{shaffer} and \cite{verge} study vertical control problems featuring a multi-product monopolistic intermediary. Full-line forcing is shown to resolve the vertical externality stemming from double marginalization and the imperfect substitutability between products. See \cite{tirolebook} and \cite{reyverge} for literature surveys.


The remainder of this paper proceeds as follows: Section \ref{sec_model} introduces the model; Section \ref{sec_analysis} conducts analysis; Section \ref{sec_extension} investigates extensions; Section \ref{sec_application} gives applications; Section \ref{sec_conclusion} concludes. Omitted materials and proofs can be found in the appendices. 

\section{Model}\label{sec_model}
\subsection{Setup}\label{sec_setup}
\paragraph{Primitives} The economy consists of a principal, finite $I$ intermediaries and a unit mass of infinitesimal agents. Agents draw hidden characteristics (denoted by $\theta$) independently from a finite set $\Theta$ according to a probability function $P_{\theta}$. A consumption bundle $\left(x,y\right)$ consists of a consumption good $x \in X \subset \mathbb{R}^d$ and a price $y \in Y=\mathbb{R}$. Agents have unit demand for consumption bundles and, for now, quasi-linear utilities $u\left(x,y,\theta\right)=v\left(x,\theta\right)-y$. A constant-returns-to-scale technology yields a profit $\pi\left(x,y,\theta\right)$ from serving a bundle $\left(x,y\right)$ to a type $\theta$ agent, where the function $\pi$ is continuous and strictly increasing in $y$. Let $\bf{0}$ denote the null bundle that yields zero reservation utility to inactive players, and assume without loss of generality that ${\bf{0}} \in X \times Y$.

\paragraph{Target social choice rule} Let $\left(\hat{x},\hat{y}\right): \Theta \rightarrow X \times Y$ be \emph{any} social choice rule that is incentive compatible and individually rational (IC-IR) for agents: 
\begin{align*}
\tag{IC$_{\theta}$} & u\left(\hat{x}\left(\theta \right), \hat{y}\left(\theta\right),\theta\right) \geq u\left(\hat{x}\left(\theta'\right), \hat{y}\left(\theta'\right),\theta\right) \text{ } \forall \theta, \theta', \label{eqn_ic}\\
\tag{IR$_{\theta}$}&\text{and } u\left(\hat{x}\left(\theta\right), \hat{y}\left(\theta\right),\theta\right) \geq 0 \text{ } \forall \theta.\label{eqn_ir}
\end{align*}
In what follows, we will take $\left(\hat{x}, \hat{y}\right)$ as given and focus on its implementation.\footnote{We remain agnostic about the principal's objective for the most part and refer the reader to Sections \ref{sec_sales} and \ref{sec_application} for concrete examples.} In the textbook case where the principal owns intermediaries, she can simply offer the menu of target consumption bundles $\left\{\left(\hat{x}\left(\theta\right), \hat{y}\left(\theta\right)\right): \theta \in \Theta \right\}$ to agents and let them self-select; after that, intermediaries passively serve the selected bundles to agents and surrender profits to the principal. 

\paragraph{Intermediated implementation}  Now suppose the technology for serving agents is owned by intermediaries, whereas the principal is limited to regulating sold consumption goods by practical barriers. In what follows, we will take these barriers as given and examine when and how the principal can implement the target social choice rule through intermediaries. We focus on the case of \emph{competitive intermediaries} $I \geq 2$ and relegate the case of \emph{monopolistic intermediary} $I=1$ to Appendix \ref{sec_monopolistic}.

Formally, for each $i=1,\cdots, I$, let $\mu_i$ denote the measure on $X\times Y \times \Theta$ induced by intermediary $i$'s sold bundles and $\nu_i$ denote the measure on $X$ induced by $\mu_i$. A policy $\psi: \Delta\left(X\right) \rightarrow \mathbb{R}$ maps measures on sold consumption goods to the reals. Based on $\nu_i$, the principal charges intermediary $i$ a fee $\psi\left(\nu_i\right)$, leaving the latter with the following net profit: 
\[
\int_{\left(x,y,\theta\right)} \pi\left(x,y,\theta\right) d\mu_i - \psi\left(\nu_i\right).
\]
Time evolves as follows: 

\begin{enumerate}
\item the principal commits to a policy $\psi: \Delta\left(X\right)\rightarrow \mathbb{R}$;
\item intermediaries propose menus of \emph{deterministic} consumption bundles $\mathcal{M}_i \in 2^{X \times Y}$, $i=1,\cdots, I$ to agents;\footnote{If intermediaries can offer lotteries of consumption goods to agents, then let the principal regulate these lotteries, i.e., $\psi:\Delta\left(\Delta\left(X\right)\right) \rightarrow \mathbb{R}$, and all upcoming results will remain valid.}
\item each agent selects a bundle from $\cup_i \mathcal{M}_{i} \cup \left\{{\bf{0}}\right\}$; 
\item intermediaries pay fees $\psi\left(\nu_i\right)$, $i=1,\cdots, I$ to the principal.
\end{enumerate}

Our solution concept is \emph{sub-game perfect equilibrium} (hereafter  equilibrium). Below is our notion of implementation; we will consider \emph{both} partial and full implementations: 

\begin{defn}\label{defn_ii}
A sub-game perfect equilibrium induced by a policy $\psi$ \emph{achieves intermediated implementation} if all agents consume  target consumption bundles and all intermediaries break even on the equilibrium path.
\end{defn}

\paragraph{Notations and assumption} Let $\hat{x}\left(\Theta\right)$, $\hat{y}\left(\Theta\right)$ and $\left(\hat{x}, \hat{y}\right)\left(\Theta\right)$ denote the images of $\Theta$ under the mappings $\hat{x}$, $\hat{y}$ and $\left(\hat{x},\hat{y}\right)$, respectively. Assume throughout that $\hat{x}\left(\Theta\right)=X$ and see Sections \ref{sec_sales} and \ref{sec_application} for interpretations in concrete examples. For each $x \in \hat{x}\left(\Theta\right)$, $\hat{y}\left(x\right)$ is the unique price $y \in \hat{y}\left(\Theta\right)$ such that $\left(x,y\right) \in \left(\hat{x}, \hat{y}\right)\left(\Theta\right)$, and 
\[
\hat{\pi}\left(x\right)=\mathbb{E}_{\theta}\left[\pi\left(x, \hat{y}\left(x\right),\theta\right)\mid \left(\hat{x}, \hat{y}\right)\left(\theta
\right)=\left(x,\hat{y}\left(x\right)\right)\right]
\]
is the expected profit from serving the bundle $\left(x, \hat{y}\left(x\right)\right)$ to its target agents.

\subsection{Candidate Policies}\label{sec_candidatepolicy}
We consider two kinds of policies: \emph{per-unit fee schedules} and \emph{distribution regulations}.  

A per-unit fee schedule, denoted by $\psi_{per-unit}$,   charges intermediaries a fee $t\left(x\right)$ for selling every unit of  consumption good $x \in X$. The fees can differ across $x$'s, and the total charge is additive across the sold units. The focus will be on $\hat{\psi}_{per-unit}$, which charges $t\left(x\right)=\hat{\pi}\left(x\right)$ and makes intermediaries break even, bundle by bundle, under the target social choice rule. The total charge to intermediary $i$ is 
\[
\hat{\psi}_{per-unit}\left(\nu_i\right)=\int_{x \in X} \hat{\pi}\left(x\right)  d\nu_i
\]
and, if negative, constitutes a subsidy.

Distribution regulations implement $\hat{\psi}_{per-unit}$ if the measure $\nu_i$ on sold consumption goods satisfies certain properties but otherwise inflict a severe penalty (denoted by ``$+\infty$'') on  intermediary $i$. The focus will be on $\hat{\psi}_{distr}$, which mandates that $\nu_i$ match the target consumption distribution $\hat{P}_x$: 
\[
\hat{\psi}_{distr}\left(\nu_i\right)=
\begin{cases}
\hat{\psi}_{per-unit}\left(\nu_i\right) &\text{ if } \frac{\nu_i}{\int_{x \in X} d\nu_i}=\hat{P}_x,\\
+\infty &\text{ otherwise. }
\end{cases}
\]

\paragraph{Discussions} The penalty term ``$+\infty$''  deters intermediaries from missing the distributional target and is never invoked on the equilibrium path. Its interpretation can be context specific, ranging from unmodeled losses from relationship severance in the car-sales example to legal penalties in the applications given in Section \ref{sec_application}.

Setting the transfers in the candidate policies to $\int_{x\in X} \hat{\pi}\left(x\right)d\nu_i$ is required by the definition of intermediated implementation, according to which intermediaries must surrender $\int_{x\in X} \hat{\pi}\left(x\right)d\nu_i$ to the principal on the equilibrium path.  $\hat{\psi}_{per-unit}$ is not the only per-unit fee schedule that implements this transfer rule; other per-unit fee schedules that might as well work would entail cross subsidization and make intermediaries break even on average.\footnote{Restricting attention to per-unit fee schedules is without loss due to the assumption of constant returns to scale.}  When demonstrating the effectiveness of the candidate policies in Theorems \ref{thm_private} and \ref{thm_interdependent}, we use $\hat{\psi}_{per-unit}$ to implement the on-path transfer rule because doing so  turns out to be without loss of generality.\footnote{As the reader will soon realize, replacing $\hat{\psi}_{per-unit}$ with per-unit fee schedules that entail cross subsidization invalidates Theorem \ref{thm_private} because intermediaries won't serve bundles that incur losses in the private-value case. It doesn't affect Theorem \ref{thm_interdependent} because given that distribution regulation already implements the target consumption rule, all that is left is to charge the target profit from intermediaries and make them break even on average. } When proving the negative result in Section \ref{sec_interdependent_perunit}, we consider all per-unit fee schedules, showing that none of them achieves intermediated implementation in any equilibrium.   

Among all consumption regulations that use $\hat{\psi}_{per-unit}$ to implement the on-path transfer rule, $\hat{\psi}_{per-unit}$ is clearly the \emph{weakest} policy. $\hat{\psi}_{distr}$ constitutes a \emph{strongest} policy because it eliminates all detectable deviations that intermediaries can possibly commit.

\paragraph{Key condition} The effectiveness of candidate policies hinges on whether intermediaries have private or interdependent values:

\begin{defn}\label{defn_pv}
Intermediaries have \emph{private values} if $\pi\left(x,y,\theta\right)$ is independent of $\theta$ for any $(x,y)\in X \times Y$, and they have \emph{interdependent values} otherwise.
\end{defn}

\subsection{Illustrative Example}\label{sec_sales}
In the car-sales example laid out in Section \ref{sec_intro}, suppose  customers have unit demand for cars. A customer's taste, denoted by $\theta$, is his private information, and his reservation utility is normalized to zero. 
A consumption bundle $\left(x,y\right)$ is a pair of car model $x \in X=\left\{\text{C-class}, \text{E-class}, \ldots \right\}$ and price $y \in Y= \mathbb{R}$. It yields a utility $u\left(x,y,\theta\right)=v\left(x,\theta\right)-y$ to type $\theta$ customers and a sales profit $\pi\left(x,y,\theta\right)=y - c\left(x,\theta\right)$ to dealers.

The term $c\left(x,\theta\right)$ represents the distribution cost incurred on advertisement, staffing, repair under warranty, etc.. In the case where $c\left(x, \theta\right)$ depends on $\theta$ (e.g., low-taste customers drive more aggressively and incur higher repair costs under warranty than high-taste customers), dealers have interdependent values. Otherwise dealers have private values.

Consider first the textbook case in which the manufacturer owns dealers and directly sells cars to customers. The target social choice rule $\left(\hat{x}, \hat{y}\right):\Theta \rightarrow X \times Y$ solves: 
\[\max_{\left(x,y\right):\Theta \rightarrow X \times Y}\int_{\theta \in \Theta} \pi\left(x\left(\theta\right),y\left(\theta\right),\theta\right)dP_{\theta}+d\left(\left(x,y\right)\right) \text{ s.t. (\ref{eqn_ic}), (\ref{eqn_ir}) } \forall \theta, \]
where the term $d\left(\left(x,y\right)\right)$ captures other concerns than sales profit (e.g., the wedge introduced by warranty, brand loyalty, etc.) and can depend on the entire social choice rule $\left(x,y\right)$. Since the manufacturer controls the car supply, it is without loss to assume $X=\hat{x}\left(\Theta\right)$. 

In reality, the distribution technology is owned by dealers, whereas the manufacturer is prohibited from influencing car prices, directly or indirectly, by the legal barrier discussed in Section \ref{sec_intro}. Yet she still controls the car supply and tracks sold cars through their unique VIN's and therefore wields policies of form $\psi: \Delta\left(X\right) \rightarrow \mathbb{R}$. 

When trying to implement the target car-price rule, the manufacturer faces two challenges. First, due to the potential difference in the objective function,  dealers may wish to charge off-target prices, e.g., in order to cover the excessive repair costs under warranty. Or they may wish to sell off-target varieties of cars, e.g., ignoring that cultivating brand loyalty requires serving low-end models to young, first-time customers. 

The second challenge stems from customer's hidden taste, which creates adverse selection between customers and dealers in the interdependent-value case. As shown in Section \ref{sec_interdependent_perunit}, the resulting distortion could persist even if the principal's goal is to maximize expected sales profit, i.e., $d\left(\left(x,y\right)\right) \equiv 0$. 

We examine the effectiveness of candidate policies in overcoming these challenges. In reality,
\begin{itemize}
\item $\psi_{per-unit}$ represents invoice-price schedules that charge possibly different per-unit fees for different car models; $\hat{\psi}_{per-unit}$ is a special case that makes dealers break even, bundle by bundle, under the target car-price rule;
\item $\hat{\psi}_{distr}$ mandates that dealers sell the target variety of cars and threatens to terminate business if the distributional target is missed. 
\end{itemize}

 \section{Main Results}\label{sec_analysis}
 
\subsection{Private Values}\label{sec_private}
The next theorem demonstrates the effectiveness of $\hat{\psi}_{per-unit}$ in the private-value case: Part (i) of it shows that $\hat{\psi}_{per-unit}$ achieves intermediated implementation in an equilibrium; Part (ii) prescribes the necessary and sufficient condition for $\hat{\psi}_{per-unit}$ to achieve intermediated implementation in every equilibrium: 

\begin{thm}\label{thm_private}
Fix any IC-IR social choice rule $\left(\hat{x}, \hat{y}\right): \Theta \rightarrow X \times Y$ and the  corresponding per-unit fee schedule $\hat{\psi}_{per-unit}$. In the private-value case, 
\begin{enumerate}[(i)]
\item there exists a sub-game perfect equilibrium that achieves intermediated implementation, in which $\mathcal{M}_i^*=\left\{\left(\hat{x}\left(\theta\right), \hat{y}\left(\theta\right)\right): \theta \in \Theta \right\}$ for $i=1,\cdots, I$ and each type $\theta$ of agent consumes $\left(\hat{x}\left(\theta\right), \hat{y}\left(\theta\right)\right)$ on the equilibrium path;
\item every sub-game perfect equilibrium achieves intermediated implementation if and only if no agent is indifferent between his bundle and any other bundle under the target social choice rule.  
\end{enumerate}
\end{thm}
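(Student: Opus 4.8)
In the private-value case $\hat{\pi}(x)=\pi(x,\hat{y}(x))$, so under $\hat{\psi}_{per-unit}$ the net profit from selling a bundle $(x,y)$ is $g(x,y):=\pi(x,y)-\pi(x,\hat{y}(x))$, which, since $\pi$ is strictly increasing in $y$, is positive, zero, or negative according as $y$ exceeds, equals, or falls below $\hat{y}(x)$. Write $\hat{U}_\theta:=u(\hat{x}(\theta),\hat{y}(\theta),\theta)$ and $\mathcal{M}^\ast:=\{(\hat{x}(\theta),\hat{y}(\theta)):\theta\in\Theta\}$. Three facts drive everything: an intermediary profits only on bundles priced strictly above target; by (\ref{eqn_ic}) each type weakly prefers its own target bundle to every other target bundle, and this ranking is unchanged by adding a common constant to all prices; and since $\hat{x}(\Theta)=X$, for every good $x$ the bundle $(x,\hat{y}(x))$ lies in $\mathcal{M}^\ast$, so $\mathcal{M}^\ast$ is always available to undercut against. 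For Part (i) I take the profile in which every intermediary posts $\mathcal{M}^\ast$ and each type $\theta$ buys $(\hat{x}(\theta),\hat{y}(\theta))$ (agents of a type splitting evenly across intermediaries): by (\ref{eqn_ic}) and (\ref{eqn_ir}) this is a best reply for agents, every trade occurs at a target price so $g=0$ on all sales and every intermediary breaks even, and no deviation helps --- any added bundle $(x,y)$ with $g(x,y)>0$ has $y>\hat{y}(x)$, hence is strictly dominated for every type by $(x,\hat{y}(x))\in\mathcal{M}^\ast$, which the others still offer, so it is never bought, while loss-making or zero-profit bundles cannot yield a positive payoff either.

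\textbf{Part (ii), sufficiency --- agents get the target payoff and all firms break even.} Assume no agent is indifferent, under the target rule, between his own bundle and any other bundle of $(\hat{x},\hat{y})(\Theta)\cup\{\mathbf{0}\}$, and fix any sub-game perfect equilibrium. \emph{Step 1:} no intermediary ever sells a loss-making bundle, because deleting all such bundles from its menu only moves its customers onto its remaining non-loss bundles or out of its store, which cannot decrease, and strictly increases if a loss bundle was sold, its payoff. Hence every sold bundle has $g\ge 0$, so total equilibrium profit is $\ge 0$, each intermediary's profit is $\ge 0$, and by (\ref{eqn_ic}) each type $\theta$ gets at most $\hat{U}_\theta$. \emph{Step 2:} every intermediary earns exactly zero. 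If some intermediary earned a positive profit, then, since $I\ge 2$, another intermediary $j$ could deviate by posting $\delta$-discounted copies of \emph{every} bundle sold at a strictly positive profit in the current equilibrium; a common price shift preserves agents' rankings, so each agent buying a profitable bundle strictly prefers, and switches to, $j$'s discounted copy of that same bundle, and as $\delta\to0$ $j$'s deviation profit tends to the economy's \emph{entire} equilibrium profit, which exceeds $j$'s own profit --- a contradiction. So total profit is zero, every sale occurs at the target price $\hat{y}(x)$, and all firms break even.

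\textbf{Part (ii), sufficiency --- agents consume the target bundles.} \emph{Step 3:} every type gets exactly $\hat{U}_\theta$: if type $\theta_0$ got strictly less, an intermediary could deviate to the marked-up target menu $\{(\hat{x}(\theta),\hat{y}(\theta)+\epsilon):\theta\in\Theta\}$, which for small $\epsilon$ attracts $\theta_0$, who --- by the strict form of (\ref{eqn_ic}) implied by no-indifference, preserved by the markup --- buys precisely $(\hat{x}(\theta_0),\hat{y}(\theta_0)+\epsilon)$, a bundle with $g>0$, handing the deviator a positive profit against a zero equilibrium profit. \emph{Step 4:} by Steps 2--3 type $\theta$ obtains utility exactly $\hat{U}_\theta$ at a target-priced bundle $(x,\hat{y}(x))$ or at $\mathbf{0}$; no-indifference forces $x=\hat{x}(\theta)$ in the first case and forces $\mathbf{0}$ to be $\theta$'s own target bundle in the second. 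Hence every type consumes his target bundle and every intermediary breaks even: intermediated implementation holds in every equilibrium.

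\textbf{Part (ii), necessity, and the main obstacle.} If some type $\theta_0$ is indifferent between $(\hat{x}(\theta_0),\hat{y}(\theta_0))$ and another bundle $b\in(\hat{x},\hat{y})(\Theta)\cup\{\mathbf{0}\}$, modify the Part (i) equilibrium so that $\theta_0$ consumes $b$ instead: this is still a best reply for $\theta_0$, all trades remain at target prices so intermediaries still break even, and the no-deviation argument of Part (i) is untouched (profitably attracting $\theta_0$ would require $(x,y)$ with $v(x,\theta_0)-y\ge\hat{U}_{\theta_0}$ and $y>\hat{y}(x)$, impossible by (\ref{eqn_ic})); yet $\theta_0$ no longer consumes his target consumption good, so intermediated implementation fails. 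I expect the delicate step to be Step 2: the undercutting deviation must recapture essentially \emph{all} of the economy's profit rather than one bundle's worth, which is why I discount every profitably-sold bundle simultaneously and use that a uniform price shift leaves each agent's choice among those bundles unchanged; ties among indifferent agents, which recur in Steps 1--2, can be broken by an additional infinitesimal discount, and in the continuum the negligible mass of near-zero-profit sales can be handled by discarding bundles whose profit is below a second small threshold.
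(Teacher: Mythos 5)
Your proof is correct and follows the paper's architecture: Steps 1--3 re-derive the paper's Lemma \ref{lem_payoffequivalence} (every sold bundle earns zero profit, every type gets exactly its target utility at a target price), and Step 4 plus the final paragraph then deduce the biconditional from the no-indifference hypothesis. The one place you take a genuinely different route is the zero-profit argument (your Step 2): you have a rival $j$ simultaneously discount \emph{every} profitably-sold bundle by a common $\delta$, appeal to the fact that a uniform shift of quasi-linear prices preserves agents' rankings among those bundles, and conclude $j$ captures essentially the economy's entire profit $P>p_j$. The paper instead picks out the \emph{single} most profitable sold bundle $(x,y)$ and has one intermediary undercut just that one bundle (with a two-line case split on whether it already sells $(x,y)$), which needs no ``common shift'' lemma and no bookkeeping on which $j$ is chosen --- a bit leaner, though both are standard Bertrand undercutting. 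Two minor remarks: your Step 3 invokes no-indifference to conclude $\theta_0$ picks exactly $(\hat x(\theta_0),\hat y(\theta_0)+\epsilon)$ from the marked-up menu, but this is unnecessary --- any marked-up target bundle $\theta_0$ buys is strictly profitable, so the deviation pays regardless (and indeed the paper's Lemma \ref{lem_payoffequivalence} holds without the indifference hypothesis, which is only invoked afterwards). And you give the ``only if'' direction an explicit bad-equilibrium construction, which the paper's appendix leaves implicit; your Step 4, which handles the possibility that a type gets its target utility at a wrong target-priced bundle (or at $\mathbf 0$) without first arguing that some target good is untraded, is also a slightly more careful way to close the ``if'' direction than the paper's.
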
 

To develop intuition for Theorem \ref{thm_private}(i), recall that $\hat{\psi}_{per-unit}$ makes intermediaries break even, bundle by bundle, under the target social choice rule. In the private-value case, this means that a consumption good makes a profit if and only if it is sold above its target price. In an augmented laissez-faire economy where the target social choice rule constitutes the production frontier, incentive compatibility implies that each agent most prefers his target consumption bundle among all bundles located on the production frontier. Thus the target social choice rule is Pareto efficient and can be sustained by the Bertrand competition between intermediaries in an equilibrium.\footnote{\cite{fagart} and \cite{bejt} examine the efficiency property of laissez-faire economies featuring private values. We devise policy interventions that implement any IC-IR social choice rule through intermediaries. }

The proof of Theorem \ref{thm_private}(ii) exploits the next lemma: 

\begin{lem}\label{lem_payoffequivalence}
Let everything be as in Theorem \ref{thm_private}. Then all agents obtain target levels of utilities and pay target prices for consumption goods in every sub-game perfect equilibrium induced by $\hat{\psi}_{per-unit}$.
\end{lem}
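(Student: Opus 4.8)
The plan is to show that under $\hat{\psi}_{per-unit}$ in the private-value case, every sub-game perfect equilibrium delivers each agent type exactly the target utility $u(\hat{x}(\theta),\hat{y}(\theta),\theta)$ and, moreover, each active agent pays exactly $\hat{y}(x)$ for whatever consumption good $x$ he ends up with. I would work on the equilibrium sub-game after the principal has committed to $\hat{\psi}_{per-unit}$, and I would organize the argument around the key observation, which Part (i)'s intuition already flags: under $\hat{\psi}_{per-unit}$ with private values, the net profit from selling a unit of good $x$ at price $y$ is $\pi(x,y)-\hat{\pi}(x)=\pi(x,y)-\pi(x,\hat{y}(x))$, which is strictly positive iff $y>\hat{y}(x)$, zero iff $y=\hat{y}(x)$, and strictly negative iff $y<\hat{y}(x)$ (using that $\pi$ is continuous and strictly increasing in $y$). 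So ``profitable bundle'' is exactly ``priced strictly above target.''

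First I would establish an \emph{upper bound}: in any equilibrium, no agent type $\theta$ can obtain strictly more than $u(\hat{x}(\theta),\hat{y}(\theta),\theta)$. Suppose some positive-mass set of type-$\theta$ agents get utility strictly above target from some bundle $(x,y)$ in the pooled menu. Since $\hat{x}(\Theta)=X$, the target bundle $(x,\hat{y}(x))$ exists; because $(x,y)$ gives type $\theta$ more than his target bundle $(\hat{x}(\theta),\hat{y}(\theta))$ does — which by (IC$_\theta$) already dominates $(x,\hat{y}(x))$ for him — we get $v(x,\theta)-y>v(x,\theta)-\hat{y}(x)$, i.e. $y<\hat{y}(x)$, so this bundle is strictly loss-making. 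A deviating intermediary can then offer, for small $\varepsilon>0$, the bundle $(x,y+\varepsilon)$ together with (copies of) all target bundles; $(x,y+\varepsilon)$ still beats the agent's outside option and all target bundles for the relevant types by continuity, so it attracts positive mass and, for $\varepsilon$ small enough that $y+\varepsilon<\hat{y}(x)$ still fails — wait, I want $y+\varepsilon$ possibly still below $\hat y(x)$, so instead the cleaner deviation is to undercut only slightly: more carefully, the standard Bertrand argument is that a competitor offers the same utility-giving bundle at a slightly higher price, capturing the agents and making strictly less of a loss (or a profit); iterating, the only prices that can survive are those at which the bundle breaks even or better for \emph{every} firm that could poach, which forces $y\ge\hat y(x)$ on the traded bundle and hence utility $\le$ target. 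I would make this precise by the usual ``an equilibrium with a loss-making traded bundle is broken by a firm that refuses to serve it'' argument, combined with ``a firm can always profitably poach a strictly-above-target-utility agent by shading price toward the target.''

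Second, the \emph{lower bound}: no type can get strictly less than target utility. Here I use that the menu $\mathcal{M}_i^\ast=\{(\hat{x}(\theta),\hat{y}(\theta)):\theta\in\Theta\}$ is always available as a deviation — an intermediary posting exactly the target menu breaks even no matter which agents select which target bundles (private values plus $\hat{\psi}_{per-unit}$ make every target bundle a zero-profit bundle), so this deviation is never strictly unprofitable; hence in equilibrium each type must be getting at least $\max_{\theta'}u(\hat{x}(\theta'),\hat{y}(\theta'),\theta)=u(\hat{x}(\theta),\hat{y}(\theta),\theta)$ by (IC$_\theta$). Combining the two bounds gives that every type gets exactly the target utility. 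Finally, for the ``pays target prices'' claim: if a type-$\theta$ agent trades $(x,y)$ on path, then $v(x,\theta)-y=u(\hat x(\theta),\hat y(\theta),\theta)\ge v(x,\theta)-\hat y(x)$ by (IC$_\theta$), so $y\ge\hat y(x)$; but $y>\hat y(x)$ would make $(x,y)$ strictly profitable, and a standard Bertrand undercut (a competitor offering $(x,y')$ with $\hat y(x)<y'<y$, plus the target menu) would strictly gain, contradicting equilibrium — so $y=\hat y(x)$.

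The main obstacle I anticipate is making the Bertrand/poaching deviations rigorous in this continuum-of-types, arbitrary-menu environment: one must handle ties in agents' selection (who breaks indifference, and how that is resolved in equilibrium), ensure the deviating menu attracts a positive mass rather than measure zero, and control the induced measure $\nu_i$ so that the fee $\hat{\psi}_{per-unit}(\nu_i)$ behaves as claimed. The non-genericity of indifference is exactly why Lemma \ref{lem_payoffequivalence} only claims payoff/price equivalence and not full implementation — an agent indifferent among several target bundles may sort ``incorrectly,'' which is harmless for utilities and prices but not for the allocation, and this is the gap that Theorem \ref{thm_private}(ii)'s no-indifference hypothesis closes.
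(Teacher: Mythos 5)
Your overall strategy is the same as the paper's (establish that under $\hat{\psi}_{per\text{-}unit}$ with private values, a traded bundle $(x,y)$ is profitable, break-even, or loss-making according as $y\gtrless\hat y(x)$, then use Bertrand-style deviations), and your upper-bound and target-price arguments are, modulo some wavering, reconstructions of the paper's Step 1 (with the case where \emph{every} firm sells the profitable bundle handled by the observation that the cannibalization cost of undercutting is $\mathcal{O}(\epsilon)$, a case you wave past but which is easy to fill in). The real problem is your lower bound. You assert that because a deviating intermediary can always post the target menu $\mathcal{M}_i^{\ast}$ and break even, ``hence in equilibrium each type must be getting at least'' its target utility. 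That inference is a non sequitur: the fact that a deviation is weakly profitable for the intermediary tells you only that the intermediary's equilibrium profit is $\geq 0$; it says nothing about agents' equilibrium payoffs. Indeed, posting the target menu and serving an underserved type $\theta$ at $(\hat x(\theta),\hat y(\theta))$ earns exactly zero on those sales, so it does not strictly improve the deviator and therefore does not break any equilibrium.

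The correct form of the lower bound --- which is what the paper does --- needs a \emph{strictly} profitable deviation, and you already have the right ingredient sitting unused: the $+\epsilon$ trick. First note (this is where your Step 1 / the zero-profit observation earns its keep) that if $\hat x(\theta)$ is traded at all on path, zero profit plus strict monotonicity of $\pi$ in $y$ forces its price to be $\hat y(\theta)$, so type $\theta$ could obtain the target utility by selecting it; hence if type $\theta$ obtains strictly less than target utility, $\hat x(\theta)$ is not traded. Then any intermediary can add $(\hat x(\theta),\hat y(\theta)+\epsilon)$: for $\epsilon$ small this strictly beats whatever type $\theta$ is currently getting (so it attracts a positive mass in the ensuing subgame) and sells at a strictly profitable margin, a contradiction. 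Note that the paper's ordering --- first prove all traded bundles break even, then deduce both the utility and the price conclusions --- is cleaner than your ``two bounds, then prices'' split, because the price claim then falls out of the zero-profit step for free rather than requiring a separate undercutting argument at the end.
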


According to Lemma \ref{lem_payoffequivalence}, if an equilibrium induced by $\hat{\psi}_{per-unit}$ fails to achieve intermediated implementation, then the only explanation is that some agent opts into a bundle he is indifferent about under the target social choice rule. Eliminating such indifference leads to the implementation of the target social choice rule in every equilibrium.

\subsection{Interdependent Values}\label{sec_interdependent}

\subsubsection{Per-unit Fee Schedule}\label{sec_interdependent_perunit}
In this section, we illustrate through a counterexample that per-unit fee schedules cannot generally achieve intermediated implementation in the interdependent-value case.

\begin{example}[label=exm_interdependent]
Suppose $\Theta=\left\{\theta_1, \theta_2\right\}$ and $\pi\left(x,y,\theta_2\right)>\pi\left(x,y,\theta_1\right)$ for any $\left(x,y\right)$, i.e., intermediaries prefer to serve type $\theta_2$ agents than type $\theta_1$ agents, other things being equal. Fix any target social choice rule with a binding (IC$_{\theta_2}$) constraint and a slack (IC$_{\theta_1}$) constraint. Below we argue in two steps that no per-unit fee schedule achieves intermediated implementation in any equilibrium. 

Consider first $\hat{\psi}_{per-unit}$.  Let $\left(\hat{x}_i, \hat{y}_i\right)$ denote the target consumption bundle of type $\theta_i$ agent, $i=1,2$, and notice that 
\[\pi\left(\hat{x}_1, \hat{y}_1, \theta_2\right)>\hat{\psi}_{per-unit}\left(\hat{x}_1\right) = \pi\left(\hat{x}_1, \hat{y}_1, \theta_1\right).\]
Suppose, to the contrary, that an equilibrium induced by $\hat{\psi}_{per-unit}$ achieves intermediated implementation. Consider a deviation by $i$ that adds $\left(\hat{x}_1, \hat{y}_1-\epsilon\right)$ to its menu, where $\epsilon$ is a small positive number. A close inspection reveals that the new consumption bundle is preferred by both types of agents to their target consumption bundles, and it strictly increases intermediary $i$'s profit:
\begin{align*}
&\rho \pi\left(\hat{x}_1, \hat{y}_1,\theta_2\right)+(1-\rho) \pi\left(\hat{x}_1, \hat{y}_1,\theta_1\right) - \hat{\psi}_{per-unit}\left(\hat{x}_1\right)\\
&=\rho \left[\pi\left(\hat{x}_1, \hat{y}_1,\theta_2\right)-\pi\left(\hat{x}_1, \hat{y}_1,\theta_1\right)\right]>0,
\end{align*} 
where $\rho$ denotes the population of type $\theta_2$ agents in the economy. This leads to a contradiction. 

Since $\hat{\psi}_{per-unit}$ makes intermediaries break even, bundle by bundle, under the target social choice rule, it remains open whether intermediated implementation could be achieved through cross subsidization. As shown in Appendix \ref{sec_a1}, the answer to this question is still negative.

Here are two remarks. First, there is no inconsistency between our negative result and the positive result of  \cite{rothschildstiglitz} showing that separating equilibria sometimes exist. The main reason is that in those separating equilibria, agent incentive constraints bind in different directions from ours. 

Second, the above distortion could persist even if the principal's goal is to maximize expected sales profit. Aside from interdependent values, all we need is that the target social choice rule has binding  (IC$_{\theta_2}$) constraint and a slack (IC$_{\theta_1}$) constraint, which is well known to be the case under mild regularity conditions  (e.g., $\Theta, X \subset \mathbb{R}$, $\theta_1<\theta_2$ and agent utility has strict increasing differences in $\left(x,\theta\right)$).
\end{example} 

\subsubsection{Distribution Regulation}\label{sec_interdependent_distribution}
In this section, we first demonstrate that $\hat{\psi}_{distr}$ achieves intermediated implementation in an equilibrium, and then give the necessary and sufficient condition for $\hat{\psi}_{distr}$ to achieve intermediated implementation in every equilibrium.

A few definitions before we go into detail. The next two definitions are standard, and Definition \ref{defn_implementable} should not be confused with intermediated implementation:

\begin{defn}\label{defn_implementable}
A consumption rule $x: \Theta \rightarrow X$ is \emph{implementable} if there exists a transfer rule $y:\Theta \rightarrow Y$ such that the social choice rule $\left(x,y\right): \Theta \rightarrow X\times Y$ is incentive compatible for all agents.
\end{defn}

 \begin{defn}\label{defn_cycle}
A bijection $\sigma: \Theta \rightarrow \Theta$ constitutes a \emph{cyclic permutation of $\Theta$} if $\theta \rightarrow \sigma \left(\theta\right) \rightarrow \sigma \circ \sigma\left(\theta\right) \rightarrow \cdots$ forms a $|\Theta|$-cycle.
\end{defn}

The next definition says that a consumption rule satisfies (\ref{eqn_du}) if permuting consumption goods among agents yields a distinct total utility of consumption:

\begin{defn}\label{defn_du}
A consumption rule $x: \Theta \rightarrow X$ satisfies \emph{(\ref{eqn_du})} if the following holds for any $\Theta'\subset \Theta$ such that $x(\theta) \neq x(\theta')$ $\forall \theta, \theta' \in \Theta'$ and any cyclic permutation $\sigma: \Theta' \rightarrow \Theta'$:
\begin{equation}\label{eqn_du}
\tag{DU} \sum_{\theta \in \Theta'} v(x(\theta), \theta) \neq \sum_{\theta \in \Theta
'} v(x(\sigma(\theta)), \theta). 
\end{equation}
\end{defn}

We now state the main result of this section: 
\begin{thm}\label{thm_interdependent}
Fix any IC-IR social choice rule $\left(\hat{x}, \hat{y}\right): \Theta \rightarrow X \times Y$ and the corresponding distribution regulation $\hat{\psi}_{distr}$. Then, 
\begin{enumerate}[(i)]
\item there exists a sub-game perfect equilibrium that achieves intermediated implementation, in which $\mathcal{M}_i^*=\left\{\left(\hat{x}\left(\theta\right), \hat{y}\left(\theta\right)\right): \theta \in \Theta\right\}$ for $i=1,\cdots, I$ and each type $\theta$ of agent consumes $\left(\hat{x}\left(\theta\right), \hat{y}\left(\theta\right)\right)$ on the equilibrium path;
\item every sub-game perfect equilibrium achieves intermediated implementation if and only if $\hat{x}: \Theta \rightarrow X$ satisfies (\ref{eqn_du}).
\end{enumerate}
\end{thm}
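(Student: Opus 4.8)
The plan is to handle the two parts separately, with part (i) being a relatively direct construction and part (ii) containing the real work.

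For part (i), I would construct the candidate equilibrium in which every intermediary posts $\mathcal{M}_i^* = \{(\hat{x}(\theta), \hat{y}(\theta)) : \theta \in \Theta\}$, agents self-select their target bundles, and off-path beliefs/selections are specified so that any deviation is unprofitable. On the equilibrium path, the sold-good distribution of each intermediary matches $\hat{P}_x$ (by the law of large numbers and the definition of the target rule), so the penalty term is not invoked and each intermediary collects exactly $\hat{\pi}$ per unit, hence breaks even by the definition of $\hat{\psi}_{per-unit}$. The substance is to show no intermediary has a profitable deviation: any deviating menu that attracts a positive mass of agents and avoids the ``$+\infty$'' penalty must induce, among the agents it serves, a consumption assignment whose distribution is exactly $\hat{P}_x$; I would argue that incentive compatibility (agents choosing optimally from the union of menus) combined with the distributional constraint forces the deviator to collect no more than the target profit on the mass it serves, so the deviation is not strictly profitable. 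A tie-breaking/limiting $\epsilon$ argument handles indifferent agents. This part I expect to be routine given the machinery already set up.

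For part (ii), the ``if'' direction is the heart of the matter. Assume $\hat{x}$ satisfies (\ref{eqn_du}); I want to show every equilibrium achieves intermediated implementation. First, a zero-profit argument: in any equilibrium, Bertrand competition plus the availability of the target menu forces every active intermediary to break even, and forces agents to receive at least their target utility (any intermediary could profitably deviate by offering a slightly-undercut version of a bundle that is currently leaving positive surplus on the table, unless the distributional penalty blocks it — here I'd need to be careful, since the penalty constrains which deviations are feasible; this is exactly the subtlety flagged in the paper's footnote). The key structural step: because each active intermediary's sold-good distribution must equal $\hat{P}_x$ (else it pays $+\infty$ and earns $-\infty$, contradicting that it could guarantee $0$ by posting $\mathcal{M}_i^*$ or staying out), the realized consumption rule in equilibrium, restricted to any intermediary, is a measure-preserving rearrangement of the target consumption rule relative to the type distribution. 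I would then invoke Rochet's cyclic-monotonicity characterization of incentive compatibility: the equilibrium allocation is itself IC (agents choose optimally), so it is cyclically monotone; but it differs from the target assignment only by permuting which types get which goods while preserving the good-distribution. If some agent does not consume his target good, there is a nontrivial cycle $\sigma$ on a subset $\Theta' \subseteq \Theta$ with distinct goods along which consumption has been permuted, and CMON applied to this cycle gives an inequality $\sum_{\theta \in \Theta'} v(x(\theta),\theta) \ge \sum_{\theta \in \Theta'} v(x(\sigma(\theta)),\theta)$; applying CMON to the target allocation (which is also IC) along the reverse cycle gives the opposite inequality with $\hat{x}$; reconciling these forces equality $\sum_{\theta \in \Theta'} v(\hat{x}(\sigma(\theta)),\theta) = \sum_{\theta \in \Theta'} v(\hat{x}(\theta),\theta)$, contradicting (\ref{eqn_du}). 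Hence the equilibrium consumption rule equals $\hat{x}$, and the zero-profit / target-utility conditions then pin down prices at $\hat{y}$, giving intermediated implementation.

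For the ``only if'' direction, I would suppose (\ref{eqn_du}) fails: there is a subset $\Theta'$ with distinct target goods and a cyclic permutation $\sigma$ of $\Theta'$ such that $\sum_{\theta \in \Theta'} v(\hat{x}(\sigma(\theta)),\theta) = \sum_{\theta \in \Theta'} v(\hat{x}(\theta),\theta)$. I would use this to build an alternative equilibrium in which agents in $\Theta'$ are served the permuted goods $\hat{x}(\sigma(\theta))$ (at appropriately adjusted prices so that the new assignment is IC and agents are exactly indifferent, which the equality makes possible — constructing these prices via the cyclic-monotonicity / rationalizability argument is the fiddly part), while agents outside $\Theta'$ keep their target bundles. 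Crucially, the permuted assignment has the same good-distribution as the target, so it survives $\hat{\psi}_{distr}$ without penalty, and the same zero-profit structure can be arranged, so it is an equilibrium — but it does not achieve intermediated implementation since those agents do not consume target goods.

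The main obstacle I anticipate is the ``if'' direction's zero-profit / undercutting argument in the presence of the distributional penalty: one must show that even though many undercutting deviations are blocked by $\hat{\psi}_{distr}$, enough remain feasible (deviations that preserve the $\hat{P}_x$ distribution, e.g. undercutting along a full ``copy'' of the target menu) to force prices down to target levels and to rule out any non-target IC allocation — and the CMON bookkeeping along cycles, making sure the subset $\Theta'$ on which goods are genuinely permuted consists of \emph{distinct} goods so that Definition \ref{defn_du} applies verbatim, will require care.
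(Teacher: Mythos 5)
Your ``if'' direction argument for part (ii) captures exactly the paper's key lemma (Lemma~\ref{lem_nicp}): given an implementable cyclic permutation $\hat{x}\circ\sigma$ on a subset $\Theta'$ of distinct goods, summing the two chains of CMON inequalities for $\hat{x}$ and for $\hat{x}\circ\sigma$ forces the equality that (\ref{eqn_du}) forbids. The ``only if'' construction via indifference cycles and a uniform price shift (so that intermediaries still break even) is likewise the paper's construction. So the route is essentially the same, not a different one.

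There are, however, two concrete gaps. First, you dismiss part (i) as ``routine,'' but it is where most of the labor lies. The paper proves it by building a graph on $X$ from the unilateral deviation $(\mathcal{M}_i, \mathcal{M}_{-i}^*)$, with an edge $x\to x'$ when some agent with target good $x$ consumes $x'$ under the deviation, and then handles isolated nodes, cycles, and chains separately, pinning down $y^i(x)=\hat{y}(x)$ in each case. Crucially, part (i) must hold \emph{even when (\ref{eqn_du}) fails}; in that case cycles cannot be killed by the CMON argument, and the paper instead invokes the off-path tie-breaking rule (agents break ties in favor of their target goods). Your phrase ``incentive compatibility plus the distributional constraint forces the deviator to collect no more than the target profit'' is the conclusion you need, not an argument for it. Second, in part (ii)'s ``if'' direction, your ``measure-preserving rearrangement decomposable into cycles'' framing silently assumes that every type participates. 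If some type opts out, the realized assignment is not a permutation of $\hat{x}$ and the graph can contain chains rather than cycles. The paper addresses this directly: it first argues $\cup_i x^i(\Theta)=X$ (else some intermediary misses the distributional target or all are inactive), then rules out chains because a chain would again force a distributional violation; only then does the cycle-free conclusion, combined with (\ref{eqn_du}), deliver $x^*=\hat{x}$. Without the chain step your cycle decomposition is not yet justified.
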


\paragraph{Proof sketch} Below we sketch the proof of Theorem \ref{thm_interdependent}. The formal analysis is relegated to Appendix \ref{sec_a2}. 

\paragraph{Part (i)} Under $\hat{\psi}_{distr}$, the only way that intermediaries can deviate from serving target consumption bundles to agents and make profits is to permute consumption goods among agents. Thus, if the permuted consumption rule cannot be the result of intermediary's profitable deviation (as illustrated in the next example), then the target consumption rule gets implemented, and the competition between intermediaries drives prices to target levels:

\begin{example}[label=exm:binary]
Suppose $\Theta=\left\{\theta_1,\theta_2\right\}$ and $\hat{x}\left(\theta_1\right) \neq \hat{x}\left(\theta_2\right)$. Write $\hat{x}_i=\hat{x}\left(\theta_i\right)$ for $i=1,2$ and $v_{ij}=v\left(\hat{x}_i, \theta_j\right)$ for $i,j=1,2$. Recall Rochet (1987)'s characterization of  implementable consumption rules by cyclic monotonicity (CMON). In the current setting, applying (CMON) to the target consumption rule yields 
\[v_{11}+v_{22} \geq v_{12}+v_{21}.\]

Consider two cases. First, if the target consumption rule satisfies (\ref{eqn_du}), i.e., \[v_{11}+v_{22} \neq v_{12}+v_{21},\] then the permuted consumption rule that assigns $\hat{x}_1$ to type $\theta_2$ agents and $\hat{x}_2$ to type $\theta_1$ violates (CMON): \[v_{12}+v_{21} <v_{11}+v_{22},\]
and therefore cannot be part of any incentive compatible allocation or  the result of intermediary's profitable deviation.  

Second, if the target consumption rule violates (\ref{eqn_du}), then agents must be indifferent under all incentive compatible allocations formed based on $\hat{x}_1$ and $\hat{x}_2$. This can be seen from writing out the incentive compatibility constraints:
 \[v_{11}-y_1 \geq v_{21}-y_2 \text{ and } v_{22}-y_2 \geq v_{12}-y_1,\]
and noting that both inequalities are binding in the absence of (\ref{eqn_du}), i.e.,
\begin{equation}\label{e1}
v_{21}-v_{11}=v_{22}-v_{12}=y_2-y_1.
\end{equation}
If agents break ties in favor of their target consumption goods off the equilibrium path, then the permuted consumption rule cannot arise off the equilibrium path, let alone be the result of intermediary's profitable deviation.  
\end{example} 

\paragraph{Part (ii)}  The ``if'' direction is immediate. To prove the ``only if'' direction, recall that absent (\ref{eqn_du}), agents must indifferent between each other's consumption bundle under all incentive compatible allocations formed based on the target consumption goods. Based on this observation, we construct, in Appendix \ref{sec_a2}, ``bad equilibria'' in which indifferent agents swap  consumption goods among themselves on the equilibrium path. Contrary to the private-value case in which all equilibria induced by $\hat{\psi}_{per-unit}$ (hence $\hat{\psi}_{distr}$) are payoff and price equivalent (Lemma \ref{lem_payoffequivalence}), here we could lose these properties due to again interdependent values. The next example illustrates this idea.

\begin{example}[continues=exm:binary]
The target social choice rule violates (\ref{eqn_du}), $\pi\left(x,y,\theta\right)=y-c\left(x,\theta\right)$ and each type of agent constitutes half the population. Write $c_{ij}=c\left(\hat{x}_i, \theta_j\right)$ for $i, j=1,2$, and normalize $\pi\left(\hat{x}_i, \hat{y}_i, \theta_i\right) \coloneqq \hat{\pi}\left(x_i\right)$ to zero for $i=1,2$. 

Absent (\ref{eqn_du}), agents are indifferent under all incentive compatible allocations formed based on $\hat{x}_1$ and $\hat{x}_2$. Take any incentive compatible allocation that assigns $\left(\hat{x}_2, y_2\right)$ to type $\theta_1$ agents and $\left(\hat{x}_1, y_1\right)$ to type $\theta_2$ agents, and note that it must satisfy Equation (\ref{e1}). To sustain this allocation in equilibrium, we must satisfy intermediary's zero-profit condition: 
\begin{equation}\label{e2}
y_1+y_2=c_{12}+c_{21}, 
\end{equation}
Solving Equations (\ref{e1}) and (\ref{e2}) yields a unique solution $\left(y_1^{\ast}, y_2^{\ast}\right)$.

Suppose an intermediary deviates unilaterally from offering the above allocation to agents. To make a profit, the deviator must offer $\left(\hat{x}_1, y_1'\right)$ to type $\theta_1$ agents and $\left(\hat{x}_2, y_2'\right)$ to type $\theta_2$ agents. The new allocation must be incentive compatible and therefore satisfy Equation (\ref{e1}), and it must incur no loss to the deviator: 
\begin{equation}\label{e3}
y_1'+y_2'\geq c_{11}+c_{22}.
\end{equation}

In the case where $c_{11}+c_{22} \gg c_{21}+ c_{12}$ (which happens only in the interdependent-value case), each type $\theta_i$ of agent  strictly prefers $\left(\hat{x}_{-i}, y_{-i}^*\right)$ to $\left(\hat{x}_i, y_i'\right)$,  where $\left(y_1', y_2'\right)$ satisfies Conditions (\ref{e1}) and  (\ref{e3}). Thus the above deviation is unprofitable, and the initial  allocation can be sustained in an equilibrium.
\end{example}

\paragraph{Discussions} The condition (\ref{eqn_du}) can be easily satisfied in both single- and multi-dimensional environments:

\begin{example}\label{exm_du1}
If $X, \Theta \subset \mathbb{R}$ and $v\left(x,\theta\right)$ has strict increasing differences in $\left(x,\theta\right)$, then any implementable consumption rule is nondecreasing in $\theta$ (\cite{milgromshannon}) and therefore must satisfy (\ref{eqn_du}).
\end{example}

\begin{example}\label{exm_du2}
Write $\Theta=\left\{\theta_1,\cdots, \theta_N\right\}$ and a consumption rule as $\left(x_1,\cdots, x_N\right)$, where $x_i$ stands for the consumption good of type $\theta_i$ agent. Let $V$ be the $N \times N$ matrix whose $ij^{th}$ entry is $v\left(x_i, \theta_j\right)$. Let $\mathbb{I}$ denote the diagonal matrix and $\Sigma$ a typical permutation matrix, both of size $N \times N$. By \cite{rochet}, the utility matrices in 
$\displaystyle \bigcap_{\Sigma \neq \mathbb{I}}\left\{V:\left(\mathbb{I}-\Sigma\right) \cdot V \geq 0\right\}$
can arise under incentive compatible allocations, whereas the utility matrices ruled out by (\ref{eqn_du}) belong to
$\displaystyle \bigcup_{\Sigma \neq \mathbb{I}} \left\{V: \left(\mathbb{I}-\Sigma\right) \cdot V=0 \right\}$. The first set has dimension $N \times N$ whereas the second one has dimension less than $N \times N$.
\end{example} 

\section{Extensions}\label{sec_extension}
\paragraph{Unknown target distribution} In the case where the principal doesn't know the exact target consumption distribution, there are at least three things she can do.  First, she can adjust the target social choice rule based on what she knows, making sure it is incentive compatible and individually rational for agents. Second, she can ask intermediaries to match distributions among themselves without reference to the target distribution:\footnote{\cite{segal} examines an optimal pricing problem in which the principal sets the price charged to agent $i$ based on the demand curve constructed from agent $-i$'s reports.  } 
\[\tilde{\psi}_{distr}\left(\nu_i\right)=\begin{cases}
\hat{\psi}_{per-unit}\left(\nu_i\right) & \text{ if } \frac{\nu_i}{\int_{x \in X} d\nu_i}= \frac{\nu_j}{\int_{x \in X} d\nu_j} \text{ } \forall j \neq i,\\
+\infty & \text{ otherwise}.
\end{cases}\]
Third, she can use distribution regulations that exploit less information than the full consumption distribution (hereafter \emph{coarse} distribution regulations). An example is full-line forcing, which mandates the sale of all consumption goods:\footnote{$\supp\left(\cdot\right)$ denotes the support of a measure.}
\[
\breve{\psi}_{distr}\left(\nu_i\right)=\begin{cases}
\hat{\psi}_{per-unit}\left(\nu_i\right) & \text{ if } \supp\left(\nu_i\right)=X,\\
+\infty & \text{ otherwise},
\end{cases}
\]
 The next corollary demonstrates the effectiveness of $\tilde{\psi}_{distr}$ and $\breve{\psi}_{distr}$ in an equilibrium:\footnote{In general, one should not expect to achieve full implementation using coarser distribution regulations than $\hat{\psi}_{distr}$, because they cannot prevent chains of exchanges (e.g.,  some type $\theta_1$ agents consume $\hat{x}_2$, all type $\theta_2$ agents consume $x_2$) from arising in equilibrium. }
\begin{cor}\label{cor_coarse}
For any IC-IR social choice rule $\left(\hat{x}, \hat{y}\right): \Theta \rightarrow X \times Y$, Theorem \ref{thm_interdependent}(i) holds for the corresponding $\tilde{\psi}_{distr}$ and $\breve{\psi}_{distr}$. 
\end{cor}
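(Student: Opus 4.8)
The plan is to show that the same symmetric profile used for Theorem~\ref{thm_interdependent}(i) survives the coarser regulations. Every intermediary posts $\mathcal{M}_i^{*}=\left\{\left(\hat{x}\left(\theta\right),\hat{y}\left(\theta\right)\right):\theta\in\Theta\right\}$; on the equilibrium path each type $\theta$ splits its mass evenly across the $I$ intermediaries and consumes $\left(\hat{x}\left(\theta\right),\hat{y}\left(\theta\right)\right)$; off path, agents best respond, breaking ties (i) in favour of their own target consumption good and (ii), among identical bundles offered by several intermediaries, in favour of non-deviators, with the residual demand split evenly across the non-deviators. On this path every $\nu_i$, once normalised, equals the target consumption distribution $\hat{P}_x$, so the distributions $\nu_1,\dots,\nu_I$ are mutually equal ($\tilde{\psi}_{distr}$ imposes no penalty) and each has support $\hat{x}\left(\Theta\right)=X$ ($\breve{\psi}_{distr}$ imposes no penalty); each intermediary's revenue from good $x$ equals the fee $\hat{\pi}\left(x\right)$, so all break even; and every agent consumes its target bundle. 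Hence the profile achieves intermediated implementation once we verify it is an equilibrium.

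For deviation-proofness, fix a unilateral deviation $\mathcal{M}_i'$, keep $\mathcal{M}_{-i}^{*}$, and let agents re-optimise under the tie-breaking above. If the deviation leaves some other intermediary active, avoiding the $\tilde{\psi}_{distr}$ penalty forces $i$'s normalised sold-good distribution to coincide with that of the active non-deviators, while avoiding the $\breve{\psi}_{distr}$ penalty forces $\supp\left(\nu_i\right)=X$; if the deviation makes $i$ the sole active intermediary, $i$ is $\tilde{\psi}_{distr}$-penalised (there is no peer distribution to match) and is $\breve{\psi}_{distr}$-penalised unless it still sells every good. In any non-penalised case I would bound $i$'s net profit $\int\left[\pi\left(x,y,\theta\right)-\hat{\pi}\left(x\right)\right]d\mu_i'$ good by good, exactly as in the proof of Theorem~\ref{thm_interdependent}(i). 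The target clientele of a good $x$ is attracted to $i$ only at a price below $\hat{y}\left(x\right)$, which by the definition of $\hat{\pi}\left(x\right)$ contributes a non-positive amount in aggregate over that clientele; so $i$ can profit only by drawing agents toward \emph{non}-target goods, and each such agent must \emph{strictly} prefer $i$'s offer to its target bundle (always available from a non-deviator, since $I\ge2$) because of the tie-break toward target goods. But ``sell every good'' (resp.\ the matching requirement) forces $i$, once it diverts some type away from its target good, to sell that vacated good to someone as well, so the diverted types that $i$ actually serves must be organised into cyclic permutations $\sigma$ of subsets $\Theta'$ on which $\hat{x}$ is injective (each $\theta\in\Theta'$ buying $\hat{x}\left(\sigma\left(\theta\right)\right)$), with $i$'s own menu rendering this incentive compatible. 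Summing the incentive constraints around such a cycle yields $\sum_{\theta\in\Theta'}v\left(\hat{x}\left(\sigma\left(\theta\right)\right),\theta\right)\ge\sum_{\theta\in\Theta'}v\left(\hat{x}\left(\theta\right),\theta\right)$, whereas (CMON) applied to the target rule gives the reverse inequality; so either $\hat{x}$ violates (\ref{eqn_du})---ruled out when $\hat{x}$ satisfies (\ref{eqn_du}), and even when it does not, all these constraints bind, so the ``cycle'' agents are indifferent and the tie-break returns them to their target goods---or no such cycle exists. Either way $i$ serves no agent through a non-target good, its net profit is non-positive, and the profile is an equilibrium.

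The step I expect to be the main obstacle is the one just sketched: showing that the \emph{coarse} constraints ``sell every good'' and ``match your peers'' remain strong enough to neutralise the adverse-selection deviation of Example~\ref{exm_interdependent} (undercut the cheap bundle, skim the more profitable type). The content is that this deviation is self-defeating under the coarse regulations: to keep selling every consumption good---or to keep another intermediary active so that a peer distribution to match even exists---the deviator must price the remaining goods attractively enough that the profitable type is re-diverted back onto them, so it cannot simultaneously skim profitably and obey the regulation; this is the mechanism already worked out for $\hat{\psi}_{distr}$ in Example~\ref{exm:binary}. Some care is also needed for the continuum bookkeeping (only a fraction of a type may be drawn in, and a type may split across several of the deviator's bundles) and for fixing the convention that an intermediary which monopolises the market counts as violating $\tilde{\psi}_{distr}$. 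With these in place, Theorem~\ref{thm_interdependent}(i) transfers to $\tilde{\psi}_{distr}$ and $\breve{\psi}_{distr}$, which is Corollary~\ref{cor_coarse}.
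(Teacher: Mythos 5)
Your proposal is correct and takes essentially the same route as the paper. The paper's own proof of Corollary~\ref{cor_coarse} is just two sentences: ``Assuming that agents of the same type adopt the same off-path tie-breaking rule in the proof of Theorem~\ref{thm_interdependent}(i) yields the effectiveness of $\breve{\psi}_{distr}$. The effectiveness of $\tilde{\psi}_{distr}$ needs no more proof.'' Your argument reconstructs why that fix is sufficient, and it is helpful to see it spelled out, because the paper's Theorem~\ref{thm_interdependent}(i) proof handles the ``chain'' case (its Case~3) by appealing to the exact distributional requirement of $\hat{\psi}_{distr}$, which $\breve{\psi}_{distr}$ no longer delivers. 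Your observation that, under a deterministic same-type tie-breaking rule with $\hat{x}(\Theta)=X$, the support requirement forces the consumption rule generated by the deviator's clientele to be a permutation of the target rule (no open chain can survive, since the good vacated at the chain's origin would have no buyer from $i$) is precisely the content needed to recast the paper's Case~3 as an instance of its Case~2 (cycles), after which (CMON) and the target-good tie-break close the argument exactly as in the paper. Your extra conventions --- tie-breaking in favour of non-deviators, and treating a monopolising deviator as violating $\tilde{\psi}_{distr}$ --- are not stated in the paper but are harmless (indeed natural) refinements that the terse published proof glosses over; the essential additional ingredient, which you implicitly supply through your deterministic tie-break, is the paper's explicit hypothesis that all agents of the same type choose identically off path.
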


\begin{proof}
Assuming that agents of the same type adopt the same off-path tie-breaking rule in the proof of Theorem \ref{thm_interdependent}(i) yields the effectiveness of $\breve{\psi}_{distr}$. The effectiveness of $\tilde{\psi}_{distr}$ needs no more proof.
\end{proof}

\paragraph{Competing mechanism} The following happen if intermediaries could offer competing mechanisms to agents.\footnote{Existing studies on competing mechanisms include \cite{epsteinpeters}, \cite{yamashita} and \cite{petersszentes} and are surveyed by \cite{peterssurvey}.} First, Theorems \ref{thm_private}(i) and \ref{thm_interdependent}(i) on partial implementation remain valid, so from the principal's perspective, the best achievable outcome through consumption regulation remains the target social choice rule. Second, in the presence of (\ref{eqn_du}), the same argument for Theorem \ref{thm_interdependent} shows that every equilibrium induced by $\hat{
\psi}_{distr}$ implements the target consumption rule. So far intermediaries have been competing \`{a} la Bertrand on the $y$-dimension. By using competing mechanisms, they could sustain more collusive levels of $y$'s, with the most collusive one solving:
\[
\max_{\left(\hat{x},y\right): \Theta \rightarrow X \times Y} \int_{\theta \in \Theta} \pi\left(\hat{x}\left(\theta\right),y\left(\theta\right),\theta\right) dP_{\theta} \text{ } \text{ s.t. (IC$_{\theta}$), (IR$_\theta$) } \forall \theta.
\]

\section{Other Applications}\label{sec_application}

\subsection{Market-based Healthcare Regulation}\label{sec_insurance}
There is a government, a finite number of insurance companies and a continuum of patients.  A patient's endowment is a finite random variable that equals $e_s$ with probability $\theta_s$, $s=1,\cdots, S$. The probability vector $\bm{\theta}=\left(\theta_1,\cdots, \theta_S\right)$ captures the patient's risk type and is his private information. The endowment realization is public information. 

An insurance policy $\left(\bm{x}, y\right)$ is a pair of consumption plan $\bm{x}=\left(x_1,\cdots, x_S\right) \in X=\left\{\text{gold plan}, \text{platinum plan}, \cdots\right\} \subset \mathbb{R}^S$ and premium $y \in Y =\mathbb{R}$. It yields an expected utility $u\left(\bm{x}, y, \bm{\theta}\right)=\sum_s \theta_s v\left(x_s-y\right)$ to type $\bm{\theta}$ patients and an expected profit $\pi\left(\bm{x}, y, \bm{\theta}\right)=y-\sum_s \theta_s \left(x_s-e_s\right)$ to insurance companies. Each patient can purchase at most one policy. Insurance companies have interdependent values.

The government wishes to implement an IC-IR social choice rule $\left(\hat{\bm{x}}, \hat{y}\right)$. Under market-based systems, she doesn't provide insurance to patients herself (in contrast to what happens under single-payer systems) but only partially regulate the policies sold by insurance companies. An example is coverage-plan subsidy, which regulates the consumption plan $\bm{x}$ but not the premium $y$. 

We examine the effectiveness of coverage-plan regulations in achieving intermediated implementation. The following are noteworthy: 
\begin{itemize}
\item the Affordable Care Act imposes strict rules on salable coverage plans and effectively limits $X$ to $\hat{\bm{x}}\left(\Theta\right)$;

\item $\psi_{per-unit}$ represents coverage-plan subsidies;

\item $\hat{\psi}_{distr}$ resembles recent rulings by the Affordable Care Act, which mandate that all participating companies in the health insurance exchange sell the target variety of coverage plans and penalize excessive sales of low-coverage plans (\cite{folger}; \cite{cox}). 
\end{itemize}

For the same reason as given in Example \ref{exm_interdependent}, the government cannot use coverage-plan subsidies to achieve the target insurance provision in general. The next proposition demonstrates the effectiveness of $\hat{\psi}_{distr}$ when patients have CARA utilities:  

\begin{cor}\label{cor_insurance}
Suppose $v\left(c\right)=1-\exp\left(-\lambda c\right)$ for some $\lambda>0$ and fix any IC-IR social choice rule $\left(\hat{x}, \hat{y}\right): \Theta \rightarrow X \times Y$. Under the corresponding $\hat{\psi}_{distr}$, 
\begin{enumerate}[(i)]
\item there exists a sub-game perfect equilibrium that achieves intermediated implementation;
\item every sub-game perfect equilibrium achieves intermediated implementation if and only if $\hat{x}: \Theta \rightarrow X$ satisfies (\ref{eqn_du}) in a quasi-linear economy where agent utilities are given by $-\lambda^{-1}\log \left(\sum_{s=1}^S \theta_s \exp\left(-\lambda x_s\right)\right)-y$.
\end{enumerate}
\end{cor}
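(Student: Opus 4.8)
The plan is to reduce the CARA insurance environment to the quasi-linear environment of Section~\ref{sec_model} by a certainty-equivalent change of variables and then invoke Theorem~\ref{thm_interdependent}. Write $\tilde v(\bm x,\bm\theta)=-\lambda^{-1}\log\big(\sum_{s=1}^{S}\theta_s e^{-\lambda x_s}\big)$ and $\tilde u(\bm x,y,\bm\theta)=\tilde v(\bm x,\bm\theta)-y$. The first step is to record the identity
\[
u(\bm x,y,\bm\theta)=\sum_{s}\theta_s\big(1-e^{-\lambda(x_s-y)}\big)=1-e^{\lambda y}\sum_{s}\theta_s e^{-\lambda x_s}=1-\exp\big(-\lambda\,\tilde u(\bm x,y,\bm\theta)\big),
\]
where the second equality uses $\sum_s\theta_s=1$. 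Since $r\mapsto 1-e^{-\lambda r}$ is strictly increasing and fixes the origin, each patient's preference ranking over policies $(\bm x,y)$ coincides with his ranking in the quasi-linear economy with utility $\tilde u$, and $u(\cdot)\ge 0\Leftrightarrow\tilde u(\cdot)\ge 0$ (so the null bundle retains reservation utility zero). Hence a social choice rule satisfies (\ref{eqn_ic}) and (\ref{eqn_ir}) in the insurance economy if and only if it does so in the quasi-linear economy with agent utilities $\tilde u$; in particular $(\hat{\bm x},\hat y)$ is IC-IR there.

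The second step is to observe that the reduction leaves everything else intact: the profit $\pi(\bm x,y,\bm\theta)=y-\sum_s\theta_s(x_s-e_s)$ is already admissible (continuous, strictly increasing in $y$, and $\bm\theta$-dependent, hence interdependent values), $\hat{\bm x}(\Theta)=X$ by the Affordable Care Act restriction, and $\hat\psi_{distr}$ is built solely from $\hat\pi$, which depends on $\pi$ alone. Consequently the four-stage game played under $\hat\psi_{distr}$ in the insurance economy has the same strategy sets as the corresponding game in the quasi-linear economy $(\tilde v,\pi)$, identical payoffs for the intermediaries and the principal, and agent payoffs that differ only by the strictly increasing reparametrization above. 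The two games therefore share the same sub-game perfect equilibria, and an equilibrium achieves intermediated implementation (agents consume target policies, intermediaries break even) in one if and only if it does in the other.

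Given this dictionary, part~(i) is Theorem~\ref{thm_interdependent}(i) applied to $(\tilde v,\pi)$, for which $(\hat{\bm x},\hat y)$ has just been shown to be IC-IR; and part~(ii) is Theorem~\ref{thm_interdependent}(ii), which states that every equilibrium achieves intermediated implementation if and only if $\hat{\bm x}$ satisfies (\ref{eqn_du}) with $v$ replaced by $\tilde v$ --- precisely the stated condition. The key idea is the certainty-equivalent transformation; the one place where care is needed is the second step, namely checking that neither the model nor the proof of Theorem~\ref{thm_interdependent} uses the cardinal agent utility beyond ordinal comparisons and the sign of the individual-rationality slack, so that the strategic equivalence between the two games is exact rather than merely heuristic. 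The displayed identity and the monotonicity of $1-e^{-\lambda r}$ are routine.
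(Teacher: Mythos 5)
Your proof is correct and follows the same route as the paper: both observe that the CARA expected utility is a strictly increasing transformation $u=1-\exp(-\lambda\tilde u)$ of the quasi-linear utility $\tilde u(\bm x,y,\bm\theta)=\tilde v(\bm x,\bm\theta)-y$, so that IC and IR are preserved and the game is strategically unchanged, and then invoke Theorem~\ref{thm_interdependent}. You spell out the identity, the preservation of the IR sign, and the strategic-equivalence step more explicitly than the paper's one-line remark, but the underlying idea is identical.
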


\begin{proof}
Rewrite (\ref{eqn_ic}) as $\tilde{v}\left(x\left(\theta\right),\theta\right)-y\left(\theta
\right)\geq \tilde{v}\left(x\left(\theta'\right),\theta\right)-y\left(\theta'\right)$ $\forall \theta'$, where $\tilde{v}\left(x,\theta\right) \coloneqq -\lambda^{-1}\log \left(\sum_{s=1}^S \theta_s \exp\left(-\lambda x_s\right)\right).$ The remainder of the proof is the same as that of Theorem \ref{thm_interdependent}.
\end{proof}

\subsection{Income Redistribution with Decentralized Contracting}\label{sec_tax}
There is a government, a finite number of firms and a continuum of workers. Each worker privately observes his innate ability $\theta$ and can work for at most one firm. An employment contract $\left(x,y\right)$ specifies an after-tax income $x \in X\subset \mathbb{R}$ and a required performance $y \in Y = \mathbb{R}$. It yields a utility $u\left(x,y,\theta\right)=x-v\left(y,\theta\right)$ to type $\theta$ workers and a profit $\pi(x,y,\theta)=h\left(y,\theta\right)-x$ to firms.
 
Whether firms have private or interdependent values depends on the contracting technology being used. The private-value case in which $y=h\left(y,\theta\right)$ represents the effective labor output is considered by  \cite{mirrlees}, whereas the interdependent-value case in which $y$ represents labor hours and $h\left(y,\theta\right)=\theta y$ is investigated by \cite{stantcheva}.

Consider first the benchmark case  in which the government owns firms and hence the contracting technology. The social choice rule $\left(\hat{x}, \hat{y}\right): \Theta \rightarrow X \times Y$ that maximizes her  redistributive objective, subject to workers' incentive constraints,\footnote{To be precise, \cite{mirrlees} ignores individual rationality constraints. Introducing them into the problem could affect the target social choice rule but not the effectiveness of our implementation strategy. } is called the \emph{constrained-efficient allocation} in the public finance literature. As for implementation, the government can propose the menu of target employment contracts $\left\{\left(\hat{x}\left(\theta\right), \hat{y}\left(\theta\right)\right): \theta \in \Theta \right\}$ to workers and let them self-select; after production takes place, she charges income tax $\hat{\pi}\left(\hat{x}\left(\theta\right)\right)$'s using the Mirrleesian tax schedule, leaving  firms break even contract by contract.

In today's market economies, it is firms, rather than the government, who design and enforce employment contracts. The government faces a partial control problem: on the one hand, it is difficult to regulate worker performance that depends on nonverifiable or even subjective information (\cite{levin}); on the other hand, it is much easier to regulate wage incomes, whose reporting is mandated by law.  

We examine the effectiveness of wage-income regulations in achieving constrained efficiency. Among the candidate policies defined in Section \ref{sec_candidatepolicy}, 

\begin{itemize}
\item $\psi_{per-unit}$ represents (nonlinear) income tax schedules that nest the Mirrleesian tax schedule  $\hat{\psi}_{per-unit}$ as a special case;

\item $\hat{\psi}_{distr}$ enforces the target income distribution among   employees.
\end{itemize}
According to our theorems, the effectiveness of income tax schedules depends on whether parties contract on effective labor outputs or labor hours. The Mirrleesian tax schedule remains effective in the first case, but is in general ineffective, along with other income tax schedules, in the second case.\footnote{\cite{stantcheva} characterizes the implementable outcomes by income tax schedules when the labor hours specified in the employment contract cannot be regulated by the government. The author adopts Miyazaki (1977)'s solution concept, which makes her results incomparable to ours.} Restoring the effectiveness of the Mirrleesian tax schedule in the second case requires the use of distribution regulations \`{a} la disability quota.\footnote{Quotas for hiring people with disabilities and the accompanying fines for non-compliance are commonly enforced in OECD countries (\cite{mont}) and are recently introduced to the U.S. by \href{https://www.disabilityscoop.com/2017/01/13/feds-set-disability-hiring-quota/23182/}{the Equal Employment Opportunity Commission}. 
A important subset of the broader disability category is self-reported disability, which includes attention deficit-disorder, back pain, diabetes, HIV/AIDS, etc.. \cite{diamondmirrlees} pioneer the idea that it is impossible to know whether a person is truly disabled, so ensuring the truth-telling of hidden disabilities is integral to the success of  redistribution systems. This idea serves as the basis for later developments in public finance, e.g., \cite{golosovtsyvinski}.} As for enforcement, frictions certainly exist in practice (e.g., the knowledge burden on the principal) but can be partly alleviated by the tricks developed in  Section \ref{sec_extension}.

\section{Conclusion}\label{sec_conclusion}
So far we have restricted intermediaries to producing homogeneous products and to competing \`{a} la Bertrand on the $y$-dimension. In reality, intermediaries can be horizontally differentiated or wield market power, and it is imperative to incorporate these considerations into future researches. Appendix \ref{sec_monopolistic} examines the case of monopolistic intermediary, showing that distribution regulation, plus rebates to agents, achieves intermediated implementation in every equilibrium.
 
Our result questions the effectiveness of certain pro-competition policies (e.g., banning resale price maintenance), showing that they may not alter the outcomes that the principal can implement through intermediaries after all. These outcomes are typically inefficient unless the principal is a benevolent social planner as in the healthcare and taxation applications.

To focus on the new implementation friction stemming from vertical restrictions and agent hidden characteristics, we abstract away from standard implementation frictions such as limited use of monetary transfers. We consider the weakest and strongest consumption regulations and give necessary and sufficient conditions for achieving intermediated implementation in every equilibrium. In the future, it will be interesting to consider in-between policies, or to characterize the implementable outcomes by the current policies when the condition for achieving full implementation fails to hold.

\appendix
\section{Omitted Proofs}
\subsection{Proofs of Section \ref{sec_private}}\label{sec_a1}
\paragraph{Proof of Lemma \ref{lem_payoffequivalence}}
\begin{proof}
Fix $\hat{\psi}_{per-unit}$ and an equilibrium it induces. Define \[A_i=\left\{\left(x,y\right): \left(x,y\right) \text{ is sold by intermediary } i \text{ to a positive measure of agents}\right\}\] and $A=\cup_{i=1}^I A_i$. Denote the measure of agents who purchase bundle $b \in A$ from intermediary $i$ by $\mu_i\left(b\right)$. Under the assumption of private values, no $b \in A$ incurs a loss because otherwise any intermediary $i$ with $\mu_i\left(b\right)>0$ can drop $b$ from its menu and save the loss. 

We proceed in two steps.

\paragraph{Step 1.} Show that every element of $A$ yields zero profit. 

\bigskip Suppose to the contrary that $A$ contains profitable elements. Since $X$ is a finite set, $A$ is, too, and therefore has a most profitable element denoted by $\left(x,y\right)$. By assumption, $\pi\left(x,y\right)-\hat{\pi}\left(x\right)>0$. 

Consider two cases. 

\paragraph{Case 1.} $\left(x,y\right) \notin A_i$ for some $i \in \left\{1,\cdots, I\right\}$. 

\bigskip

Consider a deviation by intermediary $i$ that adds $\left(x,y-\epsilon\right)$ to its menu. When $\epsilon$ is small but positive, the deviation attracts (1) all agents who used to purchase $\left(x,y\right)$ from other intermediaries than $i$, and (2) maybe some agents who used to purchase other bundles than $\left(x,y\right)$. When $\epsilon$ is sufficiently small, the first effect changes $i$'s profit by $\mu_{-i}\left((x,y)\right)\left[\pi\left(x,y-\epsilon\right)-\hat{\pi}\left(x\right)\right]>0$, whereas the second effect strictly increases $i$'s profit because $\left(x,y\right)$ is the most profitable bundle. Therefore, the deviation is profitable and the original outcome cannot arise in equilibrium.  

\paragraph{Case 2.}
$x \in A_i$ for every $i \in \left\{1,\cdots, I\right\}$. 

\bigskip

Consider a deviation by any intermediary $i$ that adds $\left(x,y-\epsilon\right)$ to its menu. In addition to the effects studied in Case 1, the deviation also attracts all agents who used to purchase $\left(x,y\right)$ from $i$. The resulting effect  $\mu_i\left((x,y)\right)\left[\pi\left(x,y\right)-\pi\left(x,y-\epsilon\right)\right]$ on $i$'s profit is negligible when $\epsilon$ is small.  

\paragraph{Step 2.} Show that all agents obtain target levels of utilities. 

\bigskip

From Step 1, we know that if the equilibrium utility of type $\theta$ agent falls short of the target level, then $\hat{x}\left(\theta\right)$ is not traded on the equilibrium path. But then any intermediary can add $\left(\hat{x}\left(\theta\right), \hat{y}\left(\theta\right)+\epsilon\right)$ to its menu and make a profit when $\epsilon$ is positive but small, which leads to a contradiction. 

On the other hand, if the equilibrium utility of type $\theta$ agent exceeds the target level, then let $\left(\hat{x}\left(\theta'\right),y\right)$ be the corresponding consumption bundle, and note that $u\left(\hat{x}\left(\theta'\right),y, \theta\right)>u\left(\hat{x}\left(\theta\right), \hat{y}\left(\theta\right),\theta\right) \geq u\left(\hat{x}\left(\theta'\right), \hat{y}\left(\theta'\right),\theta\right)$ (the first inequality is by assumption and the second one by incentive compatibility). Since $u$ is strictly decreasing in $y$ and $\pi$ is strictly increasing in $y$, it follows that $y<\hat{y}\left(\theta'\right)$ and $\pi\left(\hat{x}\left(\theta'\right),y\right)<\pi\left(\hat{x}\left(\theta'\right),\hat{y}\left(\theta'\right)\right)\coloneqq \hat{\pi}\left(\hat{x}\left(\theta'\right)\right)$, i.e., the bundle makes a loss, which is impossible. 
\end{proof}

\paragraph{Proof of Theorem \ref{thm_private}}
\begin{proof}
Part (i): Suppose to the contrary that we cannot sustain the strategy profile $\mathcal{M}_i^*=\left\{\left(\hat{x}\left(\theta\right), \hat{y}\left(\theta\right)\right): \theta \in \Theta \right\}$, $i=1,\cdots, I$ on the equilibrium path. Then there exists $\left(\hat{x}\left(\theta'\right), y\right)$ such that $u\left(\hat{x}\left(\theta'\right), y, \theta\right) \geq u\left(\hat{x}\left(\theta\right), \hat{y}\left(\theta\right), \theta\right)$ for some $\theta$, $\pi\left(\hat{x}\left(\theta'\right), y\right) \geq \hat{\pi}\left(\hat{x}\left(\theta'\right)\right)\coloneqq \pi\left(\hat{x}\left(\theta'\right), \hat{y}\left(\theta'\right)\right)$ and one of the inequalities is strict. Assume w.l.o.g. that it is $u\left(\hat{x}\left(\theta'\right), y, \theta\right) > u\left(\hat{x}\left(\theta\right), \hat{y}\left(\theta\right), \theta\right)$. Combining with incentive compatibility shows $u\left(\hat{x}\left(\theta'\right), y, \theta\right) > u\left(\hat{x}\left(\theta\right), \hat{y}\left(\theta\right), \theta\right) \geq u\left(\hat{x}\left(\theta'\right), \hat{y}\left(\theta'\right),\theta\right)$. Since $u$ is decreasing in $y$ and $\pi$ is increasing in $y$, it follows that $y<\hat{y}\left(\theta'\right)$ and 
$\pi\left(\hat{x}\left(\theta'\right), \hat{y}\left(\theta'\right)\right)>\pi\left(\hat{x}\left(\theta'\right), y\right) \geq \hat{\pi}\left(\hat{x}\left(\theta'\right)\right)\coloneqq \pi\left(\hat{x}\left(\theta'\right), \hat{y}\left(\theta'\right)\right) $, which is impossible. 

\bigskip 

\noindent Part (ii): Lemma \ref{lem_payoffequivalence} implies that if an equilibrium induced by $\hat{\psi}_{per-unit}$ fails to achieve intermediated implementation, then some target consumption good $\hat{x}\left(\theta\right)$ is not traded on the equilibrium path. But then type $\theta$ agents must be indifferent between $\left(\hat{x}\left(\theta\right), \hat{y}\left(\theta\right)\right)$ and another bundle prescribed by the target social choice rule, because otherwise any intermediary can add $\left(\hat{x}\left(\theta\right), \hat{y}\left(\theta\right)+\epsilon\right)$ to its menu and make a profit when $\epsilon$ is small but positive. 
\end{proof}

\subsection{Proofs of Section \ref{sec_interdependent}}\label{sec_a2}
\paragraph{Completing Example \ref{exm_interdependent}}
\begin{proof}
Take any other per-unit fee schedule $\psi_{per-unit}$ than $\hat{\psi}_{per-unit}$. In Section \ref{sec_interdependent_perunit}, we concluded that $\psi_{per-unit}$ could only achieve intermediated implementation  through cross subsidization. In what follows, we begin by assuming that this is the case and use $b^-$ (resp. $b^+ $) to denote the bundle that incurs a loss (resp. makes a  profit) under $\psi_{per-unit}$ (note that $b^-, b^+\in \left\{\left(\hat{x}_1, \hat{y}_1\right), \left(\hat{x}_2, \hat{y}_2\right)\right\}$). We then demonstrate that  $b^-$ and $b^+$ cannot both be traded in any equilibrium induced by $\psi_{per-unit}$  that achieves intermediated implementation, thus arriving at a contradiction. The conclusion is that  no per-unit fee schedule, whether it entails cross subsidization or not, achieves intermediated implementation in any equilibrium.

For each $b \in \left\{b^-, b^+ \right\}$, denote the measure of agents who purchase $b$ from intermediary $i$ by $\mu_i\left(b\right)$ and the net profit from serving $b$ to its target agents  by $\xi\left(b\right)$. By assumption, $\xi\left(b^+\right)>0$ and $\xi\left(b^-\right)<0$. 

Consider two cases.

\paragraph{Case 1.}$b^-=\left(\hat{x}_1, \hat{y}_1\right)$ and $b^+=\left(\hat{x}_2, \hat{y}_2\right)$. Consider two subcases.

\paragraph{Case 1(a).} Only one active intermediary (name it $i$) serves $b^-=\left(\hat{x}_1, \hat{y}_1\right)$ to type $\theta_1$ agents.

\bigskip

In this case, all other active intermediaries than $i$ must be serving $b^+=\left(\hat{x}_2, \hat{y}_2\right)$ to type $\theta_2$ agents and making profits. 
Consider a deviation by $i$ that adds $\left(\hat{x}_2, \hat{y}_2-\epsilon\right)$ to its menu. When $\epsilon$ is small but positive, the new bundle attracts all type $\theta_2$ agents in the economy while screening out all type $\theta_1$ agents, who strictly prefer $b^-=\left(\hat{x}_1, \hat{y}_1\right)$ to $b^+=\left(\hat{x}_2, \hat{y}_2\right)$ (the (IC$_{\theta_1}$) constraint is slack under the target social choice rule) and therefore find the new bundle unattractive. The resulting change in $i$'s profit $\mu_{-i}\left(b^+\right) \left[\xi\left(b^+\right)-\mathcal{O}\left(\epsilon\right)\right]-\mu_i\left(b^+\right) \mathcal{O}\left(\epsilon\right)$ is strictly positive.

\paragraph{Case 1(b).} Multiple active intermediaries serve  $b^-=\left(\hat{x}_1, \hat{y}_1\right)$ to type $\theta_1$ agents. In this case, any such intermediary can drop $b$ from its menu and save the loss.

\paragraph{Cases 2.} $b^-=\left(\hat{x}_2, \hat{y}_2\right)$ and $b^+=\left(\hat{x}_1, \hat{y}_1\right)$. Here we focus on the subcase in which only one active intermediary (name it $i$) serves $b^-=\left(\hat{x}_2, \hat{y}_2\right)$ to type $\theta_2$ agents. The proof of the other subcase is the same as that of Case 1(b) and is therefore omitted.

Suppose $i$ adds $\left(\hat{x}_1, \hat{y}_1-\epsilon\right)$ to its menu. When $\epsilon$ is small but positive, the new bundle attracts all type $\theta_1$ agents in the economy. As for type $\theta_2$ agents, they are indifferent between $b^-=\left(\hat{x}_2, \hat{y}_2\right)$ and $b^+=\left(\hat{x}_1, \hat{y}_1\right)$ (the (IC$_{\theta_2}$) constraint is binding under the target social choice rule) and therefore find the new bundle attractive, too. The resulting change in $i$'s profit is 
\begin{align*}
\tag{change caused by type $\theta_1$ agents}&\mu_{-i}\left(b^+\right)\left[\xi\left(b^+\right)-\mathcal{O}\left(\epsilon\right)\right]-\mu_{i}\left(b^+\right)\mathcal{O}\left(\epsilon\right)\\
+&\left[\mu_i\left(b^-\right)+\mu_{-i}\left(b^-\right)\right]\left[\xi\left(b^+\right)-\mathcal{O}\left(\epsilon\right)+\pi\left(b, \theta_2\right)-\pi\left(b,\theta_1\right)\right]\\
\tag{change caused by type $\theta_2$ agents}-&\mu_i\left(b^-\right)\xi\left(b^-\right),
\end{align*}
where the first and third lines are strictly positive because $\xi\left(b^+\right)>0>\xi\left(b^-\right)$, whereas the second line is strictly positive because $\pi\left(b,\theta_2\right)>\pi\left(b,\theta_1\right)$ for any $b \in \left\{b^-,b^+\right\}$. 
\end{proof}

\begin{lem}\label{lem_nicp}
Let $x: \Theta \rightarrow X$ be an implementable consumption rule that satisfies (\ref{eqn_du}). Then for any $\Theta' \subset \Theta$ such that $x\left(\theta\right) \neq x\left(\theta'\right)$ $\forall \theta, \theta' \in \Theta'$ and any cyclic permutation $\sigma$ of $\Theta'$,  $x \circ \sigma: \Theta' \rightarrow X$ is not implementable among the agents in $\Theta'$.
\end{lem}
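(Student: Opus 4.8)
The plan is to derive a contradiction from assuming that $x \circ \sigma$ is implementable among the agents in $\Theta'$. So suppose, toward a contradiction, that there is a transfer rule $y': \Theta' \to Y$ making $(x \circ \sigma, y')$ incentive compatible among the agents in $\Theta'$. The key tool is Rochet (1987)'s cyclic monotonicity characterization: a consumption rule is implementable (among a given set of types) if and only if, for every cycle of types $\theta^{(1)} \to \theta^{(2)} \to \cdots \to \theta^{(k)} \to \theta^{(1)}$, the sum of the utility differences $v(\cdot,\theta^{(j)})$ around the cycle is nonnegative. I would apply this to the single $|\Theta'|$-cycle furnished by $\sigma$ itself.

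First I would write out what CMON gives for the \emph{original} rule $x$ restricted to $\Theta'$. Since $x$ is implementable (on all of $\Theta$, hence on $\Theta'$), applying CMON to the cycle $\theta \to \sigma(\theta) \to \sigma^2(\theta) \to \cdots$ yields
\[
\sum_{\theta \in \Theta'} v(x(\sigma(\theta)),\theta) \;\le\; \sum_{\theta \in \Theta'} v(x(\theta),\theta),
\]
where I have matched each type $\theta$ with the report $\sigma(\theta)$ (the ``next'' element of the cycle) and used that summing $v(x(\sigma(\theta)),\theta) - v(x(\theta),\theta)$ over $\theta \in \Theta'$ telescopes the price terms away. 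By (\ref{eqn_du}), because $x(\theta) \neq x(\theta')$ for all distinct $\theta,\theta' \in \Theta'$ and $\sigma$ is a cyclic permutation of $\Theta'$, this inequality is \emph{strict}:
\[
\sum_{\theta \in \Theta'} v(x(\sigma(\theta)),\theta) \;<\; \sum_{\theta \in \Theta'} v(x(\theta),\theta).
\]

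Next I would apply CMON to the \emph{permuted} rule $x \circ \sigma$, which by the contradiction hypothesis is implementable among $\Theta'$. Here the right cycle to test is the inverse cycle $\theta \to \sigma^{-1}(\theta) \to \sigma^{-2}(\theta) \to \cdots$: matching type $\theta$ with the report $\sigma^{-1}(\theta)$, the consumption good that type $\theta$ would receive under that report is $x(\sigma(\sigma^{-1}(\theta))) = x(\theta)$, while under truthful reporting it receives $x(\sigma(\theta))$. Summing the utility differences around this cycle and telescoping the prices, CMON forces
\[
\sum_{\theta \in \Theta'} v(x(\theta),\theta) \;\le\; \sum_{\theta \in \Theta'} v(x(\sigma(\theta)),\theta),
\]
which directly contradicts the strict inequality from the previous paragraph. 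Hence $x \circ \sigma$ cannot be implementable among the agents in $\Theta'$.

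The only delicate point — and the step I would be most careful with — is the bookkeeping of which permutation to feed into CMON for each of the two rules, and the observation that $\sigma^{-1}$ is itself a cyclic permutation of $\Theta'$ (so that the CMON constraint for $x \circ \sigma$ along this cycle is genuinely one of the constraints guaranteed by its assumed implementability), together with the reindexing $\theta \mapsto \sigma^{-1}(\theta)$ that lets me write both sums over the same index set $\Theta'$. Once those two sums are lined up over the same index set, the contradiction is immediate from strictness in (\ref{eqn_du}). A minor additional remark: implementability on $\Theta$ implies implementability on any subset $\Theta' \subset \Theta$ because every CMON cycle within $\Theta'$ is also a CMON cycle within $\Theta$, so no separate argument is needed for the restriction.
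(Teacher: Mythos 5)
Your proof is correct and follows essentially the same route as the paper's: both derive the inequality $\sum_{\theta \in \Theta'} v(x(\sigma(\theta)),\theta) \le \sum_{\theta \in \Theta'} v(x(\theta),\theta)$ from implementability of $x$, the reverse inequality from the assumed implementability of $x\circ\sigma$ (the paper sums the IC constraints directly, which telescopes the prices exactly as invoking CMON does), and then contradicts the strictness supplied by (\ref{eqn_du}). Your explicit remark that the second CMON constraint is evaluated along the inverse cycle $\sigma^{-1}$ is just making visible what the paper's choice of indexing does implicitly.
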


\begin{proof}
Take any $\Theta'$ be as above and write $\Theta'=\left\{\theta_1,\cdots, \theta_m\right\}$. Also write $x\left(\theta_i\right)=x_i$ for $i=1,\cdots, m$. By assumption, there exists $y: \Theta' \rightarrow \mathbb{R}$ such that
\begin{align*}
v\left(x_2, \theta_1\right)-y_2 &\leq v\left(x_1, \theta_1\right)-y_1,\\
v\left(x_3, \theta_2\right)-y_3 &\leq v\left(x_2, \theta_2\right)-y_2,\\
&\vdots\\
v\left(x_1, \theta_m\right)-y_1 & \leq v\left(x_m, \theta_m\right)-y_m.
\end{align*}
Summing over these inequalities yields $\sum_{i=1}^{m} v\left(x_{i+1}, \theta_i\right) \leq \sum_{i=1}^m v\left(x_i,\theta_i\right)$. Meanwhile, if $\left(x_2,\cdots, x_m, x_1\right)$ is implementable among $\left(\theta_1,\cdots, \theta_m\right)$, then there exists $y': \Theta' \rightarrow \mathbb{R}$ such that \begin{align*}
v\left(x_2, \theta_1\right)-y_2' &\geq v\left(x_1, \theta_1\right)-y_1', \\
v\left(x_3, \theta_2\right)-y_3' &\geq v\left(x_2, \theta_2\right)-y_2', \\
&\vdots\\
v\left(x_1, \theta_m\right)-y_1' & \geq v\left(x_m, \theta_m\right)-y_m'.
\end{align*}
Summing over these inequalities yields $\sum_{i=1}^{m} v\left(x_{i+1}, \theta_i\right) \geq \sum_{i=1}^m v\left(x_i,\theta_i\right)$. Thus, a sufficient condition for $\left(x_2,\cdots, x_m, x_1\right)$ to be not implementable among $\left(\theta_1,\cdots, \theta_m\right)$ is (\ref{eqn_du}), which says that $\sum_{i=1}^{m} v\left(x_{i+1}, \theta_i\right) \neq \sum_{i=1}^m v\left(x_i, \theta_i\right)$. 
\end{proof}

\paragraph{Proof of Theorem \ref{thm_interdependent}(i)}
\begin{proof}
Take any strategy profile $\left(\mathcal{M}_i, \mathcal{M}_{-i}^*\right)$ whereby intermediary $i$ deviates unilaterally from offering $\left\{\left(\hat{x}\left(\theta\right), \hat{y}\left(\theta\right)\right): \theta \in \Theta\right\}$ to agents. Below we demonstrate that the deviation must be  unprofitable if agents break ties in favor of their target consumption goods off the equilibrium path. We begin by assuming that the deviation is profitable and show that this leads to a contradiction.

For each $j=1,\cdots, I$, let $x^j: \Theta \rightarrow X$ denote the consumption correspondence generated by intermediary $j$'s sold bundles and $x^j\left(\Theta\right)$ the image of $x^j$. For any $x, x' \in X$ such that $\hat{x}\left(\theta\right)=x$ and $x' \in x^j\left(\theta
\right)$ for some $\theta$ and $j$, we say that $x$ (resp. $x'$) is an \emph{immediate predecessor} (resp. \emph{immediate successor}) of $x'$ (resp. $x$), and write $x \rightarrow x'$. Let $y^j\left(x\right)$ denote the price charged by intermediary $j$ for any $x \in x^j\left(\Theta\right)$. Note that $y^j\left(x\right)=\hat{y}\left(x\right)$ for any $j \neq i$ and $x \in X$, where $\hat{y}\left(x\right)$ is the target price of $x$. 

Consider the graph generated by $\left(\mathcal{M}_i, \mathcal{M}_{-i}^*\right)$. For each $x \in x^i\left(\Theta\right)$, only three situations can happen: (1) $x$ has no immediate predecessor or successor, (2) $x$ is part of a cycle, and (3) $x$ is part of a chain. We now go through these cases one by one.

\paragraph{Case 1.} $x$ has no immediate predecessor or successor. In this case, we must have $y^i\left(x\right)=\hat{y}\left(x
\right)$, as the remaining possibilities can be ruled out in as follows. First, if $y^i\left(x\right)>\hat{y}\left(x\right)$, then no agent purchases $x$ from $i$, who must therefore violate the distributional requirement and incur a loss, a contradiction. Second, if $y^i\left(x\right)<\hat{y}
\left(x\right)$, then all agents with target consumption good $x$ purchase it from $i$. To meet the distributional requirement, $i$ must charge $y^i\left(x'\right) \leq \hat{y}\left(x'\right)$ for every $x' \in X-\left\{x\right\}$ and incur a loss, a contradiction. 

\paragraph{Case 2.} $x$ is part of a cycle $x \rightarrow x' \rightarrow \cdots$. Let $\theta, \theta', \cdots$ be any agent sequence consuming (1) $x$, $x'$, $\cdots$ under the target social choice rule, and (2) $x'$, $x''$, $\cdots$ under $\left(\mathcal{M}_i, \mathcal{M}_{-i}^*\right)$. If $\hat{x}: \Theta \rightarrow X$ satisfies (\ref{eqn_du}), then the new consumption rule that assigns $x'$ to $\theta$, $x''$ to $\theta'$, so on and so forth, cannot be part of any incentive compatible allocation and therefore cannot arise under $\left(\mathcal{M}_i, \mathcal{M}_{-i}^*\right)$. If $\hat{x}: \Theta \rightarrow X$ violates (\ref{eqn_du}), then the inequalities in the proof of Lemma \ref{lem_nicp} are all binding, implying that $\theta$ is indifferent between $x$ and $x'$, $\theta'$ is indifferent between $x'$ and $x''$, so on and so forth. This, together with the assumption that agents break ties in favor of their target consumption goods, implies that the cycle $x \rightarrow x' \rightarrow \cdots$ cannot arise under $\left(\mathcal{M}_i, \mathcal{M}_{-i}^*\right)$.

\paragraph{Case 3.} $x$ is part of a chain. Take any chain with end node $x'= \hat{x}\left(\theta'\right)$, and let $x''= \hat{x}\left(\theta''\right)$ be an immediate predecessor of $x'$. Note the following:
\begin{enumerate}[(1)]
\item incentive compatibility implies $v\left(x'', \theta'' \right) - \hat{y}\left(x''\right) \geq v\left(x', \theta''\right)-\hat{y}\left(x'\right)$;
\item $x'' \rightarrow x'$ implies $v\left(x'', \theta''\right)-\hat{y}\left(x''\right) \leq v\left(x', \theta''\right)-\min \left\{ y^i\left(x'\right), \hat{y}\left(x'\right)\right\}$;
\item the definition of end node implies that all type $\theta'$ agents consume $x'$. 
\end{enumerate}
Combining (1)-(3) yields $y^i\left(x'\right)=\hat{y}\left(x'\right)$ (hence $y^i\left(x\right)=\hat{y}\left(x\right)$ for every $x$ in the chain by induction), as the remaining possibilities can be ruled out as follows. First, if $y^i\left(x'\right)>\hat{y}\left(x'\right)$, then (2) becomes $v\left(x'', \theta''\right)-\hat{y}\left(x''\right) \leq  v\left(x', \theta''\right)-\hat{y}\left(x'\right) \leq v\left(x'', \theta''\right)-\hat{y}\left(x''\right)$, where the last inequality follows from incentive compatibility. Thus type $\theta''$ agents are indifferent between $x''$ and $x'$, and $x'' \rightarrow x'$ is incompatible with their off-path tie-breaking rule. Second, if $y^i\left(x'\right)<\hat{y}\left(x'\right)$, then the fact that $x'$ is an end node and  $x'' \rightarrow x'$ implies that all type $\theta'$ agents and some type $\theta''$ agents purchase $x'$ from intermediary $i$. Thus $i$ must violate the distributional requirement and incur a loss, a contradiction. 

\bigskip

The above argument shows that $y^i\left(x\right)=\hat{y}\left(x\right)$ for any $x \in x^i\left(\Theta\right)$. Then from $x^i\left(\Theta\right)=X$ (to meet the distributional requirement and make a profit), it follows that $\mathcal{M}_i=\left\{\left(\hat{x}\left(\theta\right), \hat{y}\left(\theta\right)\right): \theta \in \Theta \right\}$, a contradiction.
\end{proof}

\paragraph{Proof of Theorem \ref{thm_interdependent}(ii)}
\begin{proof}
\emph{``If'' direction} \quad Take any equilibrium strategy profile and let $x^i$ denote the consumption correspondence generated by intermediary $i$'s sold bundles. Note that $\cup_{i=1}^I x^i\left(\Theta\right)=X$, because otherwise either some intermediary violates the distributional requirement or all intermediaries are inactive, and neither situation can arise in equilibrium. 

Take the graph generated by the equilibrium strategy profile (as defined in the proof of Theorem \ref{thm_interdependent}(i)). Clearly, the graph contains no cycle because of (\ref{eqn_du}), and it contains no chain because otherwise some intermediary must violate the distributional requirement and would rather exit the market.  Thus the target consumption rule gets implemented, and the competition between intermediaries drives prices to target levels.

\bigskip

\noindent \emph{``Only if'' direction} \quad  Let $\Theta'$ denote a generic  subset of $\Theta$ such that $\hat{x}(\theta)\neq \hat{x}(\theta')$ for any $\theta, \theta' \in \Theta'$ and $\sum_{\theta \in \Theta'} v(\hat{x}(\theta),\theta)=\sum_{\theta \in \Theta'} v(\hat{x}\left(\sigma\left(\theta\right)\right),\theta)$ for some cyclic permutation $\sigma:\Theta' \rightarrow \Theta'$. Hereafter $\Theta'$ is called an  \emph{indifference cycle}. By definition, (1) $\hat{x}$ violates (\ref{eqn_du}) if and only if indifference cycle exists, and (2) any implementable consumption correspondence that differs from the target consumption rule and satisfies the distributional requirement can be obtained from permuting  consumption goods along indifference cycles. 

Take any consumption correspondence as in point (2). Combining it with any transfer rule $y=\hat{y}+d$ where $d \in \mathbb{R}$ is  independent of $\theta$ yields an incentive compatible social choice correspondence. Let $\hat{d}$ be the constant such that the resulting social choice correspondence yields zero profit under $\hat{\psi}_{distr}$. To sustain this social choice correspondence in an equilibrium, suppose that agents adopt the same tie-breaking rule on and off the equilibrium path. Then from point (2) of the previous paragraph, it follows that off the equilibrium path, agents must consume the same consumption goods as they do on the equilibrium path in order for the deviating intermediary to meet the distributional requirement. But then the deviation must be unprofitable, because no one can undercut $\hat{y}+\hat{d}$ and still make a profit. 
\end{proof}


\section{Monopolistic Intermediary}\label{sec_monopolistic}
This appendix examines the case of monopolistic intermediary $I=1$. To obtain the sharpest insights, suppose $X, \Theta$ are closed intervals, and the distribution of agent type has a positive density on $\Theta$. The function $v\left(x,\theta\right)$ is  continuously differentiable and satisfies the single-crossing property.

For starters, note that per-unit fee schedules cannot generally deter the intermediary from performing monopolistic screening among agents, i.e., distort the consumption of some agent in order to extract information rents from other agents.\footnote{The resulting vertical externality differs from double marginalization, because double marginalization requires downward sloping demand curves to work, whereas our agents have unit demand for consumption bundles.} To make progress, let $\left\{\left(\hat{x}\left(\theta\right), \overline{y}\left(\theta\right)\right): \theta \in \Theta\right\}$ be the IC-IR allocation that implements the target consumption rule and charges the maximal price for every  consumption good. By the envelope theorem (\cite{milgromsegal}), 
\[\overline{y}\left(\theta\right)=v\left(\hat{x}\left(\theta\right),\theta\right)-\int_{\underline{\theta}}^{\theta} v_{\theta}\left(\hat{x}\left(s\right), s\right)ds.\]
For each $x \in X$, let $\overline{y}\left(x\right)$ be the unique price such that $\left(x,\overline{y}\left(x\right)\right) \in \left(\hat{x}, \overline{y}\right)\left(\Theta\right)$ and $\overline{\pi}\left(x\right)$ be the expected profit from serving $\left(x,\overline{y}\left(x\right)\right)$ to $\left\{\theta\in \Theta: \left(\hat{x}, \overline{y}\right)\left(\theta\right)=\left(x, \overline{y}\left(x\right)\right)\right\}$. Refine the distribution regulation as follows: 
\[\overline{\psi}_{distr}\left(\nu\right)=\begin{cases}
\int_{x \in X} \overline{\pi}\left(x\right) d\nu &\text{ if } \nu=\hat{P}_x,\\
+\infty &\text{ otherwise},
\end{cases}\] 
where $\nu$ denotes the measure on $X$ induced by the sold bundles.

\begin{thm}\label{thm_monopoly} 
For any IC-IR social choice rule $\left(\hat{x},\hat{y}\right):\Theta \rightarrow X \times Y$, enforcing the corresponding distribution regulation $\overline{\psi}_{distr}$ on the monopolistic intermediary and paying a uniform rebate $\overline{y}\left(\underline{\theta}\right)-\hat{y}\left(\underline{\theta}\right)$ to all agents achieve  intermediated implementation in every equilibrium.
\end{thm}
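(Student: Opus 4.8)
The plan is to adapt the argument for Theorem \ref{thm_interdependent} (the competitive, distribution-regulated case) to a single intermediary, using $\overline{\psi}_{distr}$ and the rebate to pin down the intermediary's incentives. First I would observe that, because $\overline{\psi}_{distr}$ imposes an infinite penalty unless $\nu=\hat P_x$, in any equilibrium the measure of sold consumption goods must equal $\hat P_x$, so the intermediary necessarily supplies the target variety of goods; the only freedom it retains is \emph{which type consumes which good} and \emph{at what price}. Thus any equilibrium consumption correspondence is obtained from $\hat x$ by permuting goods among agents, and since $v$ satisfies single-crossing on the interval $\Theta$, $\hat x$ is monotone and satisfies (\ref{eqn_du}) (Example \ref{exm_du1}); hence by Lemma \ref{lem_nicp} no nontrivial permutation of the target goods is implementable. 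Therefore in every equilibrium the intermediary assigns $\hat x(\theta)$ to each type $\theta$, exactly as in the ``if'' direction of Theorem \ref{thm_interdependent}(ii) (ruling out cycles and chains via (\ref{eqn_du}) and the distributional constraint).

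Next I would handle prices. Given that the target consumption rule is implemented, the intermediary chooses a transfer rule $y:\Theta\to\mathbb R$ making $(\hat x,y)$ incentive compatible and individually rational (after the rebate) to maximize expected profit $\int \pi(\hat x(\theta),y(\theta),\theta)\,dP_\theta - \int_X \overline\pi(x)\,d\hat P_x$. Since the fee term is a constant (it equals total expected profit under $(\hat x,\overline y)$), this is equivalent to maximizing $\int y\,dP_\theta$ subject to (IC) and the participation constraint that each type's post-rebate utility be nonnegative. The standard envelope/single-crossing analysis shows the profit-maximizing IC transfer rule, subject to the lowest type receiving a fixed utility level, is exactly the ``maximal-price'' schedule $\overline y$ shifted by a constant: if the lowest type $\underline\theta$ is left with utility $w$, then $y(\theta)=\overline y(\theta) - w$ for all $\theta$, by the envelope formula for $\overline y(\theta)$ quoted in the appendix. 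So the intermediary will push all prices up until the binding participation constraint of $\underline\theta$ is reached. With a uniform rebate of $\overline y(\underline\theta)-\hat y(\underline\theta)$, the lowest type's outside option effectively becomes that rebate amount, forcing $w = \overline y(\underline\theta)-\hat y(\underline\theta)$ and hence $y(\theta)=\overline y(\theta)-\big(\overline y(\underline\theta)-\hat y(\underline\theta)\big)$; combined with the rebate the net price paid by type $\theta$ is $\overline y(\theta) - \overline y(\underline\theta) + \hat y(\underline\theta)$, which by the envelope representation of both $\overline y$ and (the IC-consistent) $\hat y$ equals $\hat y(\theta)$. Thus each agent ends up consuming $(\hat x(\theta),\hat y(\theta))$, and the intermediary's net profit equals the target profit, so intermediated implementation obtains in every equilibrium.

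Finally I would verify existence (the ``in an equilibrium'' half is subsumed, but one should confirm at least one equilibrium of this form exists): offer the menu $\{(\hat x(\theta),\overline y(\theta)):\theta\in\Theta\}$, have agents self-select, pay the rebate; this is feasible, satisfies the distributional constraint, and no deviation helps by the two arguments above.

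\textbf{Main obstacle.} The delicate step is the pricing argument: showing that \emph{every} equilibrium, not just the profit-maximizing one, delivers prices $\overline y$ up to the constant pinned by the rebate. Once the consumption rule is forced to be $\hat x$, the intermediary is an unconstrained monopolist on the price dimension facing only IC and the rebate-augmented IR; the subtlety is that IC alone (with $\hat x$ fixed and single-crossing) pins the \emph{shape} of $y$ up to an additive constant via the envelope theorem, and then profit maximization — which is what equilibrium play of the single intermediary requires — pins that constant by driving $\underline\theta$'s participation constraint to equality against the rebate. Making this rigorous requires care that the envelope characterization applies on the interval $\Theta$ with the stated differentiability and single-crossing hypotheses, and that no equilibrium can leave the intermediary pricing below $\overline y$ (it would strictly profit by a uniform upward shift, which preserves IC and the distributional constraint).
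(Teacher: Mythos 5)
Your first step (pinning down the consumption rule) is in the right spirit, though you route it through (\ref{eqn_du}) and Lemma~\ref{lem_nicp}, which are stated for finite $\Theta$; Appendix~\ref{sec_monopolistic} has $\Theta$ a closed interval, so the paper instead invokes \cite{milgromshannon} directly — any implementable consumption correspondence is nondecreasing, which together with $\nu=\hat P_x$ gives $x^*=\hat x$. That substitution is cosmetic; the idea is the same.

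The pricing step, however, contains a genuine error. You claim the rebate ``effectively becomes the lowest type's outside option,'' so the monopolist leaves surplus $w=\overline y(\underline\theta)-\hat y(\underline\theta)$ and charges the \emph{reduced} schedule $y(\theta)=\overline y(\theta)-\big(\overline y(\underline\theta)-\hat y(\underline\theta)\big)$. That is not right: the rebate is paid uniformly to all agents whether or not they purchase, so it adds the same constant to the inside and outside options and cancels from every IR constraint. The monopolist's problem is unchanged, and it charges the \emph{full} maximal schedule $y^*=\overline y$, which is also what the paper's proof asserts. Your version would have the monopolist price strictly below $\overline y$, so its revenue $\int \pi(\hat x,\overline y-r,\theta)\,dP_\theta$ would fall short of the fee $\int\overline\pi\,d\hat P_x$ — it would run a strict loss, contradicting both its incentive to charge more and the ``break even'' clause of Definition~\ref{defn_ii}. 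Your final expression $\overline y(\theta)-\overline y(\underline\theta)+\hat y(\underline\theta)$ is coincidentally the correct net price, but only because you did not subtract the rebate from your (incorrect) gross price $\overline y - r$; doing so consistently would yield $\overline y - 2r$. The correct chain is: IC plus single-crossing pins the \emph{shape} of $y^*$ via the envelope formula; the fixed fee and the unchanged IR force the constant to satisfy $y^*=\overline y$; the rebate then lowers the net payment to $\overline y(\theta)-\big(\overline y(\underline\theta)-\hat y(\underline\theta)\big)=\hat y(\theta)$, using that $\overline y$ and $\hat y$ share the same envelope slope and differ by a type-independent constant.
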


\begin{proof}
Take any social choice correspondence $\left(x^*, y^*\right)$ that can arise in equilibrium. By \cite{milgromshannon}, any implementable consumption correspondence is nondecreasing in $\theta$, which combined with $\nu=\hat{P}_x$ yields $x^*=\hat{x}$ and hence $y^*=\overline{y}$. Meanwhile, applying the envelope theorem to the target social choice rule yields
$
\hat{y}\left(\theta\right)=v\left(\hat{x}\left(\theta\right),\theta\right)-\int_{\underline{\theta}}^{\theta} v_{\theta}\left(\hat{x}\left(s\right),s\right)ds +  \text{constant}
$
and hence $\overline{y}\left(\theta\right)-\hat{y}\left(\theta\right)=\overline{y}\left(\underline{\theta}\right)-\hat{y}\left(\underline{\theta}\right)$ for any $\theta$,\footnote{When agent types are discrete as in the baseline model, we can still use distribution regulation to implement the target consumption rule (assuming (\ref{eqn_du})), but can no longer use the envelope theorem to pin down $\overline{y}\left(\theta\right)-\hat{y}\left(\theta\right)=\overline{y}\left(\underline{\theta}\right)-\hat{y}\left(\underline{\theta}\right)$ for all $\theta$. Rather than a uniform rebate, we should give personalized rebate equal to $\overline{y}\left(\theta\right)-\hat{y}\left(\theta\right)$ if $x=\hat{x}\left(\theta\right)$.} so the uniform rebate $\overline{y}\left(\underline{\theta}\right)-\hat{y}\left(\underline{\theta}\right)$ restores the net prices paid by agents to the target levels.
\end{proof}

Theorem \ref{thm_monopoly} shows that distribution regulation, plus a uniform rebate to agents, achieve intermediated implementation through a monopolistic intermediary. Besides the aforementioned applications, this result explains why authorities impose strict  rules on the housing types that major real-estate companies can develop for low-income households. It suggests that one way to stop schools from oversubscribing students into the \href{https://www.air.org/edsector-archives/blog/fraud-lunchroom}{National School Lunch Program} is to cap the number of eligible students based on the income distribution within the school district.

\end{document}